\let\MYcaption\@makecaption
\let\@makecaption\MYcaption
\begin{document}






\ \vspace{0.2cm}

{\fontsize{14}{18}\bf\noindent Eigenvalues of two-phase quantum walks with \\ one defect in one dimension}
\ \vspace{0.7cm}
\\
{\large \bf  Chusei Kiumi\ $\cdot$ \   Kei\  Saito}
\footnote{
\hspace{-6mm}
Chusei Kiumi \\
Mathematical Science Unit, Graduate School of Engineering Science, Yokohama National University, Hodogaya, Yokohama, 240-8501, Japan\\ E-mail: kiumi-chusei-bf@ynu.jp
\\ \ \\[-5pt]
Kei Saito \\
Department of Information Systems Creation, Faculty of Engineering, Kanagawa University, 
Kanagawa, Yokohama, 221-8686, Japan\\ E-mail: ksaito55.76@gmail.com
}
\vspace{1.2cm}
\begin{abstract}
 We study space-inhomogeneous quantum walks (QWs) on the integer lattice which we assign three different coin matrices to the positive part, the negative part, and the origin, respectively. We call them two-phase QWs with one defect. They cover one-defect and two-phase QWs, which have been intensively researched.  Localization is one of the most characteristic properties of QWs, and various types of two-phase QWs with one defect exhibit localization. Moreover, the existence of eigenvalues is deeply related to localization. In this paper, we obtain a necessary and sufficient condition for the existence of eigenvalues. Our analytical methods are mainly based on the transfer matrix, a useful tool to generate the generalized eigenfunctions. Furthermore, we explicitly derive eigenvalues for some classes of two-phase QWs with one defect, and illustrate the range of eigenvalues on unit circles with figures. Our results include some results in previous studies, e.g. Endo et al. (2020).
\keywords{Quantum walk \and Two-phase Quantum walk \and Quantum walk with one defect \and Eigenvalue \and Localization}
\end{abstract}

\section{Introduction}
\label{intro}
Discrete-time quantum walks (QWs) are the quantum counterparts of classical random walks, and their applications have attracted various research fields.
Two-state QWs on $\mathbb{Z}$ are well known as the most typical models \cite{AAKVW,CGMV,K}. In particular, QWs, the dynamics of the walker depend on each position, are called space-inhomogeneous QWs, and various applications have been studied. For instance, one-defect QWs where the walker at the origin behaves differently are used in quantum searching algorithms \cite{AKR,CG,KSW}, and two-phase QWs where the walker behaves differently in each of negative and non-negative part are related to the research of topological insulators \cite{KRBD}. In this paper, we deal with two-phase QWs with one defect, which are combined models of two-phase and one-defect QWs. In particular, we concentrate on the analysis of the eigenvalues of the time evolution via the method of transfer matrices.
It is known that the existence of eigenvalues is a necessary and sufficient condition for localization, which is a characteristic property of QWs \cite{S,RST1,RST2}.
The method of transfer matrices is used by \cite{KKK,KKK2,MSSSS} to construct generalized eigenfunctions and stationary measures. Endo \cite{EKST2} treated the specific case of two-phase QWs with one defect and succeeded in revealing the eigenvalues and time averaged limit measures. However, the results for general cases have not been clarified.

The rest of this paper is organized as follows.
In Section \ref{sec:2}, we introduce some notations and definitions of our models.
Moreover, we present Theorem \ref{THEO_1}, which gives a necessary and sufficient condition for the existence of eigenvalues. This theorem is a key result of this paper. In Section \ref{sec:3}, we use Theorem \ref{THEO_1} to clarify the eigenvalues for the five classes of two-phase QWs with one defect, where four of them include previous studies \cite{EEKK,EKO,EKST,EKST2,WLKGGB}. 
The remaining section is assigned to summarize our results.

\section{Definitions and techniques  }
\label{sec:2}
\subsection{Quantum walks on $\mathbb{Z}$}
Firstly, we introduce two-state QWs on the integer lattice $\mathbb{Z}$.
Let $\mathcal{H}=\ell^2(\mathbb{Z} ; \mathbb{C}^2)$ be a Hilbert space of states. The time evolution $U=SC$ is defined by the product of two unitary operators $S$ and $C$, on $\mathcal{H}$. Here, $S$ denotes the shift operator, and $C$ denotes the coin operator, respectively.
For $\Psi\in\mathcal{H}$, we write $\Psi(x)={}^T\begin{bmatrix}\Psi_L(x) & \Psi_R(x)\end{bmatrix}$, then $S$ and $C$ are defined as follows:
 \begin{align*}
	(S\Psi)(x)=\begin{bmatrix}
	\Psi_L(x+1)	\\ \Psi_R(x-1)
	\end{bmatrix},
	\qquad
	(C\Psi)(x)
	=
	\tilde C_x\Psi(x),
	\end{align*}where $\{\tilde C_x\}_{x\in\mathbb{Z}}$ is a sequence of $2\times2$ unitary matrices called coin matrices. We write $\tilde C_x$ as below:
    \begin{align*}
	\tilde C_x = e^{i\Delta_x}
	\begin{bmatrix}
	\alpha_x & \beta_x
	\\
	-\overline{\beta_x} & \overline{\alpha_x}
	\end{bmatrix},
	\end{align*}where $\alpha_x, \beta_x\in\mathbb{C},\ \alpha_x\neq 0,\ \Delta_x\in [0,2\pi)$ and $|\alpha_x|^2+|\beta_x|^2=1$. We define two matrices $P_x$ and $Q_x$ by
 \begin{align*}
	P_{x}=\begin{bmatrix}
	1 & 0 \\ 0 & 0
	\end{bmatrix}\tilde C_x,
	\qquad
	Q_{x}=\begin{bmatrix}
	0 & 0 \\ 0 & 1
	\end{bmatrix}\tilde C_x.
	\end{align*}Then, the time evolution can be written as 
    \begin{align}
	\label{A01_a}
	(U\Psi)(x)=P_{x+1}\Psi(x+1)+Q_{x-1}\Psi(x-1), \quad x\in\mathbb{Z}.
	\end{align}

\begin{remark}
If there exists $x_0\in\mathbb{Z}$ such that $\alpha_{x_0}=0$, then
\begin{align*}
	P_{x_0}P_{x_0+1}=Q_{x_0}Q_{x_0-1}=O
	\end{align*}holds, where $O$ denotes the zero matrix. Thus, position $x_0$ becomes reflecting boundary, and $\mathbb{Z}$ is divided into disconnected half-lines. In this paper, we do not treat this case. Hence, we assume $\alpha_x\neq0$ for all $x\in\mathbb{Z} $.
    
\end{remark}

Let $\Psi_0\in\mathcal{H}\ (\|\Psi_0\|^2=1)$ be an initial state of the QW, then the probability distribution at time $t\in\mathbb{Z}_{\geq 0}$ is defined by  $\mu_t^{(\Psi_0)}(x)=\|(U^t\Psi_0)(x)\|^2$. Here, we say that the QW exhibits localization if there exists an initial state $\Psi_0\in\mathcal{H}$ and position $x_0\in\mathbb{Z}$ which satisfy $\limsup_{t\to\infty}\mu^{(\Psi_0)}_t(x)>0$. In particular, it is known that the QW exhibits localization if and only if there exists an eigenvalue of $U$, namely, there exists $\lambda\in[0, 2\pi),\ \Psi\in\mathcal{H}$  such that
\begin{align*}
	U\Psi=e^{i\lambda}\Psi.
	\end{align*}
	
	\begin{theorem}
	\label{THEO_rangeEv}
	If there exists a constant $N$ such that the set of parameters $(\alpha_x,\ \beta_x, \Delta_x)$ becomes a set of constants $(\alpha,\beta,\Delta)$ for $|x|>N$, then $|\cos(\lambda - \Delta)|>|\alpha|$ holds for any $e^{i\lambda}\in\sigma_{\rm p}(U)$.
	Here, $\sigma_{\rm p}(U)$ denotes the set of eigenvalues of $U$.
	\end{theorem}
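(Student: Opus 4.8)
The plan is to read off the asymptotic behaviour of an eigenfunction in the homogeneous tail $x>N$ through the transfer matrix, and to show that square-summability forces that matrix to be \emph{hyperbolic}, which is exactly the inequality $|\cos(\lambda-\Delta)|>|\alpha|$. First I would rewrite the eigenvalue equation $U\Psi=e^{i\lambda}\Psi$ componentwise using \eqref{A01_a}. Since $\alpha_x\neq0$ for every $x$, the two resulting scalar equations can be solved for $\Psi(x+1)$ in terms of $\Psi(x)$, yielding a transfer matrix $T_x$ with $\Psi(x+1)=T_x\Psi(x)$ and $\det T_x=(\overline{\alpha_x}/\alpha_{x+1})\,e^{i(\Delta_x-\Delta_{x+1})}\neq0$. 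The invertibility of every $T_x$ is the ingredient I will use at the very end to propagate a single vanishing value of $\Psi$ to the whole lattice.

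In the tail $x>N$ the parameters equal the constants $(\alpha,\beta,\Delta)$, so $T_x$ coincides with a fixed matrix $T$. A direct computation gives $\det T=\overline{\alpha}/\alpha$ and, using $|\alpha|^2+|\beta|^2=1$, $\operatorname{tr}T=2\cos(\lambda-\Delta)/\alpha$; hence the eigenvalues of $T$ are $\mu_\pm=\alpha^{-1}\bigl(\cos(\lambda-\Delta)\pm\sqrt{\cos^2(\lambda-\Delta)-|\alpha|^2}\bigr)$ with $\mu_+\mu_-=\overline{\alpha}/\alpha$, so $|\mu_+\mu_-|=1$ always. The whole dichotomy is controlled by the sign of the discriminant $\cos^2(\lambda-\Delta)-|\alpha|^2$: when it is positive exactly one of $|\mu_\pm|$ exceeds $1$ and the other lies below $1$, whereas when it is $\leq0$ both eigenvalues sit on the unit circle. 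Note that positivity of the discriminant is precisely the asserted inequality $|\cos(\lambda-\Delta)|>|\alpha|$.

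I would then argue by contraposition. Suppose $|\cos(\lambda-\Delta)|\leq|\alpha|$, so both eigenvalues of $T$ have modulus $1$. Writing $\Psi(N+n)=T^n\Psi(N)$ and decomposing $\Psi(N)$ in an eigenbasis (or in a Jordan basis in the degenerate case) I would show the iterates never decay: if $T$ is diagonalizable, then $\Psi(N+n)=\mu_1^n c_1 e_1+\mu_2^n c_2 e_2$ with $|\mu_i|=1$, and the Gram-matrix estimate $\|\Psi(N+n)\|^2\geq\lambda_{\min}(G)(|c_1|^2+|c_2|^2)$ keeps $\|\Psi(N+n)\|$ bounded below by a positive constant; if $T$ is a nontrivial Jordan block the norm even grows linearly. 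In every subcase $\sum_{n\geq0}\|\Psi(N+n)\|^2=\infty$ unless $\Psi(N)=0$. Since $\Psi\in\ell^2(\mathbb{Z};\mathbb{C}^2)$, this forces $\Psi(N)=0$, and then the invertibility of each $T_x$ propagates $\Psi(x)=0$ to all $x\in\mathbb{Z}$, contradicting $\Psi\neq0$. Hence every eigenvalue satisfies $|\cos(\lambda-\Delta)|>|\alpha|$.

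The main obstacle is this last paragraph, and in particular the boundary case $|\cos(\lambda-\Delta)|=|\alpha|$: there $T$ has a repeated unit-modulus eigenvalue and may fail to be diagonalizable, so I cannot merely invoke a spectral-gap estimate and must treat the Jordan block separately to rule out decay. I would also flag that the crude bound $\|\Psi(N+n)\|\geq\|\Psi(N)\|/\|T^{-n}\|$ is too weak, since $\|T^{-n}\|$ can grow polynomially; it is the Gram-matrix (equivalently, Cesàro-average) argument that actually yields divergence of the series. Everything else is routine linear algebra on a $2\times2$ matrix.
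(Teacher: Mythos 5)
Your argument is correct, and it is in substance the same argument the paper uses: the paper does not actually prove Theorem~\ref{THEO_rangeEv} (it defers to \cite{MSSSS}), but its own Lemma~\ref{LEM_A} establishes exactly this conclusion by your dichotomy --- when $|\cos(\lambda-\Delta)|\le|\alpha|$ both eigenvalues of the constant-tail transfer matrix have modulus one, so the iterates cannot decay and the candidate eigenfunction is not square-summable, whence no eigenvalue. The only cosmetic difference is that the paper lets the transfer matrix act on the shifted vector $(J\Psi)(x)={}^T[\Psi_L(x-1)\ \ \Psi_R(x)]$ so that $T_x$ involves only the coin at $x$, whereas your recursion $\Psi(x+1)=T_x\Psi(x)$ mixes parameters at $x$ and $x+1$; in the constant tail this is immaterial, and your trace $2\cos(\lambda-\Delta)/\alpha$ and determinant $\overline{\alpha}/\alpha$ reproduce the paper's $\zeta_{x,\pm}$ of (\ref{A02_a}). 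One point where your write-up is actually more careful than the paper's: the boundary case $|\cos(\lambda-\Delta)|=|\alpha|$ with $\beta\neq0$ genuinely produces a repeated unit-modulus eigenvalue with a one-dimensional eigenspace (the transfer matrix is \emph{not} normal there, despite the paper's passing remark --- normality requires $\beta=0$ or $\sin(\lambda-\Delta)=0$), so the spectral-gap/Gram estimate alone does not apply and the Jordan-block growth argument you supply is needed; the paper's one-line justification in Lemma~\ref{LEM_A} glosses over exactly this subcase. Your proof is therefore complete and, if anything, closes a small gap in the paper's own exposition.
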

	\begin{proof}
	See \cite{MSSSS}.
    \qed
	\end{proof}
    Let $J$ be a unitary operator on $\mathcal{H}$ defined as	\begin{align*}
	    (J\Psi)(x)=
	    \begin{bmatrix}
	    \Psi_L(x-1) \\ \Psi_R(x)
	    \end{bmatrix},
	    \quad \Psi\in\mathcal{H},\ x\in\mathbb{Z}.
	\end{align*}
	The inverse of $J$ is given as
\begin{align*}
	   	(J^{-1}\Psi)(x)=
	    \begin{bmatrix}
	    \Psi_L(x+1) \\ \Psi_R(x)
	    \end{bmatrix},
	    \quad \Psi\in\mathcal{H},\ x\in\mathbb{Z}.
	\end{align*}
	Moreover, we introduce a transfer matrix $T_x(\lambda)$ for $\lambda\in[0,2\pi),\ x\in\mathbb{Z}$ as below:
	\begin{align*}
	T_x(\lambda)=
	\frac{1}{\alpha_x}
	\begin{bmatrix}
	e^{i(\lambda-\Delta_x)} & -\beta_x
	\\
	-\overline{\beta_x} & e^{-i(\lambda-\Delta_x)}
	\end{bmatrix}.
	\end{align*}The transfer matrix is a normal matrix and its inverse matrix is
	\begin{align*}
	T_x^{-1}(\lambda)=
	\frac{\alpha_x}{|\alpha_x|^2}
	\begin{bmatrix}
	e^{-i(\lambda-\Delta_x)} & \beta_x
	\\
	\overline{\beta_x} & e^{i(\lambda-\Delta_x)}
	\end{bmatrix}.
	\end{align*}These definitions lead us to the key proposition for this paper.

	\begin{proposition}
	\label{PROP_Transfer}
	    For $\lambda\in[0,2\pi)$ and $\Psi\in\mathcal{H}$, following (i) and (ii) are equivalent.
	    \begin{align*}
	       &\text{(i)}\quad \Psi\in\ker(U-e^{i\lambda}).
	       \\
	       &\text{(ii)}\quad (J\Psi)(x+1)=T_x(\lambda)(J\Psi)(x),\quad x\in\mathbb{Z}.
	    \end{align*}

	\end{proposition}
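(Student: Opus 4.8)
The plan is to unravel both statements into families of scalar equations indexed by $x\in\mathbb{Z}$ and to check that the two families coincide. Since every manipulation involved is a reversible algebraic rearrangement, in which I divide only by the nonzero quantities $\alpha_x$, $e^{i\Delta_x}$ and $e^{i\lambda}$, matching the scalar equations will establish the equivalence of (i) and (ii) in both directions at once, with no separate argument needed for each implication.

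First I would expand condition (i). Writing $(U\Psi)(x)=P_{x+1}\Psi(x+1)+Q_{x-1}\Psi(x-1)=e^{i\lambda}\Psi(x)$ componentwise and using that $P_x$ has nonzero entries only in its first row while $Q_x$ has nonzero entries only in its second row, the vector equation at each $x$ splits into the two scalar relations
\begin{align*}
e^{i\Delta_{x+1}}\bigl(\alpha_{x+1}\Psi_L(x+1)+\beta_{x+1}\Psi_R(x+1)\bigr)&=e^{i\lambda}\Psi_L(x),\\
e^{i\Delta_{x-1}}\bigl(-\overline{\beta_{x-1}}\Psi_L(x-1)+\overline{\alpha_{x-1}}\Psi_R(x-1)\bigr)&=e^{i\lambda}\Psi_R(x),
\end{align*}
which I will call the top and bottom relations. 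Next I would expand condition (ii): using $(J\Psi)(x)={}^T[\Psi_L(x-1),\ \Psi_R(x)]$ and $(J\Psi)(x+1)={}^T[\Psi_L(x),\ \Psi_R(x+1)]$, the matrix identity $(J\Psi)(x+1)=T_x(\lambda)(J\Psi)(x)$ likewise splits into two scalar equations, one determining $\Psi_L(x)$ and one determining $\Psi_R(x+1)$ from $\Psi_L(x-1)$ and $\Psi_R(x)$.

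Matching the two systems is then a direct computation, and I expect the bottom row to be the only nonroutine step. The top row of the transfer identity, after multiplying through by $\alpha_x e^{i\Delta_x}$, is exactly the top relation of (i) with $x$ replaced by $x-1$, so these correspond immediately; as $x$ ranges over $\mathbb{Z}$ the shifted index also ranges over all of $\mathbb{Z}$, so the full top families agree. The main obstacle is the bottom row: the transfer identity expresses $\Psi_R(x+1)$ through $\Psi_L(x-1)$ and $\Psi_R(x)$, whereas the bottom relation of (i) at index $x+1$ expresses $\Psi_R(x+1)$ through $\Psi_L(x)$ and $\Psi_R(x)$. To reconcile them I would use the top relation to eliminate $\Psi_L(x)$ in favour of $\Psi_L(x-1)$ and $\Psi_R(x)$; collecting the coefficient of $\Psi_R(x)$ then yields the expression $\tfrac{|\beta_x|^2}{\alpha_x}+\overline{\alpha_x}$, which collapses to $\tfrac{1}{\alpha_x}$ precisely because of the unitarity constraint $|\alpha_x|^2+|\beta_x|^2=1$. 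This is the one place where the normalization of the coin matrix is essential; once it is in place the bottom rows agree, and since every step is reversible, conditions (i) and (ii) are equivalent.
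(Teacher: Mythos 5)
Your proposal is correct and follows essentially the same route as the paper: both reduce (i) to the same pair of scalar recurrences $e^{i\lambda}\Psi_L(x-1)=e^{i\Delta_x}(\alpha_x\Psi_L(x)+\beta_x\Psi_R(x))$ and $e^{i\lambda}\Psi_R(x+1)=e^{i\Delta_x}(-\overline{\beta_x}\Psi_L(x)+\overline{\alpha_x}\Psi_R(x))$, and both hinge on the unitarity identity $|\alpha_x|^2+|\beta_x|^2=1$ to produce the transfer matrix. The only difference is organizational --- the paper packages the rearrangement as inverting a triangular matrix in the identity $A(J\Psi)(x+1)=B(J\Psi)(x)$, whereas you eliminate $\Psi_L(x)$ by scalar substitution --- and your reversibility argument correctly handles the fact that the bottom rows are matched only modulo the top relation.
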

	\begin{proof}
    $\Psi\in\ker(U-e^{i\lambda})$ if and only if the following conditions hold for  $x\in \mathbb{Z}$:
    	\begin{align*}
	e^{i\lambda}\Psi_L(x-1) &= e^{i\Delta_{x}}(\alpha_{x}\Psi_L(x)+\beta_{x}\Psi_R(x)),
	\\
		e^{i\lambda}\Psi_R(x+1) &= e^{i\Delta_{x}}(-\overline{\beta_{x}}\Psi_L(x)+\overline{\alpha_{x}}\Psi_R(x)).
	\end{align*}
    By a direct calculation, we get
    \begin{align*}
	\begin{bmatrix}
	e^{i\Delta_x}\alpha_x & 0
	\\
	-e^{i\Delta_x}\overline{\beta_x} & -e^{i\lambda}
	\end{bmatrix}
	(J\Psi)(x+1)
	=
	\begin{bmatrix}
	e^{i\lambda} & -e^{i\Delta_x}\beta_x
	\\
	0 & -e^{i\Delta_x}\overline{\alpha_x}
	\end{bmatrix}
	(J\Psi)(x)
	\end{align*}which is equivalent to 
    \begin{align*}
    (J\Psi)(x+1)=
    \frac{1}{\alpha_x}
	\begin{bmatrix}
	e^{i(\lambda-\Delta_x)} & -\beta_x
	\\
	-\overline{\beta_x} & e^{-i(\lambda-\Delta_x)}
	\end{bmatrix}(J\Psi)(x).
    \end{align*}\qed
	\end{proof}
	\begin{corollary}
	\label{CORO_A}
	For $\lambda\in[0,2\pi)$ and $\varphi\in\mathbb{C}^2\setminus\{\bm{0}\}$, we define $\tilde\Psi : \mathbb{Z}\to \mathbb{C}^2$ as follows:
	\begin{align}
	\label{A01_b}
	\tilde\Psi(x)=
	\begin{cases}
	T_{x-1}(\lambda)T_{x-2}(\lambda)\cdots T_1(\lambda) T_0(\lambda)\varphi,\quad &x>0,
	\\[+5pt]
	\varphi, \quad & x=0,
	\\[+5pt]
	T_{x}^{-1}(\lambda)T_{x+1}^{-1}(\lambda)\cdots T_{-2}^{-1}(\lambda)T_{-1}(\lambda)\varphi,\quad &x<0.
	\end{cases}
	\end{align}
    Then, the existence of $\varphi$ satisfying $\tilde\Psi\in\mathcal{H}$ provides a necessary and sufficient condition for $e^{i\lambda}\in\sigma_{\rm p}(U)$.
    \end{corollary}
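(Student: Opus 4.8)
The plan is to recognise that the function $\tilde\Psi$ built in \eqref{A01_b} is precisely the unique solution of the transfer-matrix recurrence
\[
\tilde\Psi(x+1)=T_x(\lambda)\tilde\Psi(x),\qquad x\in\mathbb{Z},
\]
with prescribed value $\tilde\Psi(0)=\varphi$, and then to transport the equivalence of Proposition \ref{PROP_Transfer} through the unitary operator $J$. First I would confirm directly from the piecewise definition that $\tilde\Psi$ satisfies this recurrence at every $x$: for $x\ge 0$ it is immediate from the product form, while for $x<0$ it follows from $T_x(\lambda)T_x^{-1}(\lambda)=I$, and the two pieces match at the junction $x=-1$ because $T_{-1}(\lambda)\tilde\Psi(-1)=T_{-1}(\lambda)T_{-1}^{-1}(\lambda)\varphi=\varphi=\tilde\Psi(0)$. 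Conversely, since every $T_x(\lambda)$ is invertible, the recurrence determines $\tilde\Psi$ uniquely from its value at the origin, so any solution of the recurrence taking the value $\varphi$ at $0$ coincides with \eqref{A01_b}.

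For the sufficiency direction, suppose some $\varphi\in\mathbb{C}^2\setminus\{\bm{0}\}$ yields $\tilde\Psi\in\mathcal{H}$. I would set $\Psi=J^{-1}\tilde\Psi$; because $J$ is unitary, $\Psi\in\mathcal{H}$ and $J\Psi=\tilde\Psi$ satisfies the recurrence, so Proposition \ref{PROP_Transfer} gives $\Psi\in\ker(U-e^{i\lambda})$. As $\tilde\Psi(0)=\varphi\neq\bm{0}$ we have $\tilde\Psi\neq 0$, hence $\Psi\neq 0$, and therefore $e^{i\lambda}\in\sigma_{\rm p}(U)$. For the necessity direction, assume $e^{i\lambda}\in\sigma_{\rm p}(U)$ and pick a nonzero $\Psi\in\ker(U-e^{i\lambda})$. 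Proposition \ref{PROP_Transfer} shows that $J\Psi$ satisfies the recurrence, so with $\varphi:=(J\Psi)(0)$ the uniqueness observed above forces $J\Psi=\tilde\Psi$; since $J\Psi\in\mathcal{H}$, the constructed $\tilde\Psi$ lies in $\mathcal{H}$.

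The one point requiring care — and the only genuine obstacle — is guaranteeing $\varphi\neq\bm{0}$ in the necessity direction, since the corollary restricts $\varphi$ to $\mathbb{C}^2\setminus\{\bm{0}\}$. Here the invertibility of the transfer matrices is decisive: if $(J\Psi)(0)=\bm{0}$, then propagating the recurrence in both directions (applying $T_x(\lambda)$ to the right and $T_x^{-1}(\lambda)$ to the left) would force $(J\Psi)(x)=\bm{0}$ for every $x$, whence $J\Psi=0$ and, by unitarity of $J$, $\Psi=0$, contradicting the choice of a nonzero eigenfunction. Thus $\varphi=(J\Psi)(0)\neq\bm{0}$, which closes the argument. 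Everything else is bookkeeping riding on Proposition \ref{PROP_Transfer} and the norm-preserving property of $J$.
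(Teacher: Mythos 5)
Your proposal is correct and follows essentially the same route as the paper: transport the eigenvalue equation through the unitary $J$ and invoke Proposition \ref{PROP_Transfer}, taking $\varphi$ to be the value of the recurrence solution at the origin. You are in fact slightly more careful than the paper's two-line proof, both in identifying $\varphi$ as $(J\Psi)(0)$ rather than $\Psi(0)$ and in checking via invertibility of the transfer matrices that $\varphi\neq\bm{0}$.
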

	\begin{proof}
	Firstly, for $\lambda \in [0,2\pi),$ if there exists $\varphi$ satisfying $\tilde\Psi\in\mathcal{H}$, then $\Psi = J^{-1}\tilde \Psi$ satisfies (ii) in Proposition \ref{PROP_Transfer}, therefore $e^{i\lambda}\in\sigma_{\rm p}(U)$.
	Secondly, if there exists $\Psi\in\ker(U-e^{i\lambda})$, then $\varphi =\Psi(0)$ provides $\tilde \Psi\in\mathcal{H}$.
    \qed
	\end{proof}
    
    Henceforward, we discuss about $\varphi$ satisfying $\tilde\Psi\in\mathcal{H}$. Let $\zeta_{x,\pm}$ be eigenvalues of $T_x(\lambda)$. Then,
    \begin{align}
	\label{A02_a}
	\zeta_{x,\pm}=\frac{\cos(\lambda-\Delta_x)\pm\sqrt{\cos^2(\lambda -\Delta_x)-|\alpha_x|^2}}{\alpha_x}.
	\end{align}Additionally, their associated eigenvectors $|v_{x,\pm}\rangle$ are given as follows:

\begin{enumerate}
 \item[]{$\bullet\ \beta_x\neq 0$ case :} 
\begin{align}
\label{A03_a}
	|v_{x,\pm}\rangle
	=
	\begin{bmatrix}
	\beta_x \\ i\sin(\lambda -\Delta_x)\mp\sqrt{\cos^2(\lambda -\Delta_x)-|\alpha_x|^2}
	\end{bmatrix},
\end{align}
\item[]{$\bullet\ \beta_x = 0$ case : }
\begin{align}
\label{A04_a}
	|v_{x,+}\rangle
	=
	\begin{bmatrix}
	1 \\ 0
	\end{bmatrix}
	,
	\quad
	|v_{x,-}\rangle
	=
	\begin{bmatrix}
	0 \\ 1
	\end{bmatrix}
	.
\end{align}
\end{enumerate}	
Note that when $\beta_x=0$, $\zeta_{x,\pm}$ is described as $e^{\pm i(\lambda -\Delta_x)}/\alpha_x$.
\begin{lemma}
\label{LEM_A}
For constants $N\in\mathbb{N},\ \alpha\in\mathbb{C}\setminus\{0\}$ and $\Delta\in[0,2\pi)$, we assume $(\alpha_x, \Delta_x)=(\alpha, \Delta)$ for $x>N$ or $x<-N$.
If $|\cos(\lambda -\Delta)|\leq|\alpha|$ holds, then $e^{i\lambda}\not\in\sigma_{\rm p}(U)$.
\end{lemma}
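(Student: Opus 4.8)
The plan is to invoke Corollary \ref{CORO_A}: since $e^{i\lambda}\in\sigma_{\rm p}(U)$ holds if and only if there is some $\varphi\in\mathbb{C}^2\setminus\{\bm{0}\}$ with $\tilde\Psi\in\mathcal{H}=\ell^2(\mathbb{Z};\mathbb{C}^2)$, it suffices to show that under the hypothesis $|\cos(\lambda-\Delta)|\le|\alpha|$, \emph{no} choice of $\varphi\neq\bm{0}$ makes $\tilde\Psi$ square-summable. I would treat the case where the constant tail lies on the positive side, $x>N$; the case $x<-N$ is entirely symmetric once one replaces the forward products $T_{x-1}(\lambda)\cdots T_0(\lambda)$ by the backward products built from the (equally normal) inverse transfer matrices $T_x^{-1}(\lambda)\cdots T_{-1}^{-1}(\lambda)$, and one need only argue on whichever single tail is constant.

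The crucial step is to read off the moduli of the eigenvalues $\zeta_{x,\pm}$ of $T_x(\lambda)$ on the constant tail. For $x>N$ we have $(\alpha_x,\Delta_x)=(\alpha,\Delta)$, so by \eqref{A02_a} the discriminant $\cos^2(\lambda-\Delta)-|\alpha|^2$ is nonpositive exactly when $|\cos(\lambda-\Delta)|\le|\alpha|$; writing the square root as $i\sqrt{|\alpha|^2-\cos^2(\lambda-\Delta)}$ gives
\begin{align*}
|\zeta_{x,\pm}|^2=\frac{\cos^2(\lambda-\Delta)+\bigl(|\alpha|^2-\cos^2(\lambda-\Delta)\bigr)}{|\alpha|^2}=1 .
\end{align*}
Thus both eigenvalues of $T_x(\lambda)$ have modulus $1$ for every $x>N$. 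Since $T_x(\lambda)$ is normal, it is unitarily diagonalizable, and a normal matrix all of whose eigenvalues lie on the unit circle is unitary (in the degenerate case $|\cos(\lambda-\Delta)|=|\alpha|$ the matrix is simply $\zeta\,I$ with $|\zeta|=1$). Hence each $T_x(\lambda)$ with $x>N$ preserves the Euclidean norm on $\mathbb{C}^2$.

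I would then propagate this along the tail. For $x>N$ factor
\begin{align*}
\tilde\Psi(x)=T_{x-1}(\lambda)\cdots T_{N+1}(\lambda)\,\psi,\qquad \psi:=T_{N}(\lambda)\cdots T_{0}(\lambda)\,\varphi .
\end{align*}
Every transfer matrix is invertible, so $\psi\neq\bm{0}$ whenever $\varphi\neq\bm{0}$, and every factor $T_{x-1}(\lambda),\dots,T_{N+1}(\lambda)$ is unitary by the previous step; therefore $\|\tilde\Psi(x)\|=\|\psi\|>0$ for all $x>N$. Consequently
\begin{align*}
\sum_{x\in\mathbb{Z}}\|\tilde\Psi(x)\|^2\ \ge\ \sum_{x>N}\|\psi\|^2=\infty ,
\end{align*}
so $\tilde\Psi\notin\mathcal{H}$ for every $\varphi\neq\bm{0}$, and Corollary \ref{CORO_A} forces $e^{i\lambda}\notin\sigma_{\rm p}(U)$.

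The only delicate point is the borderline equality $|\cos(\lambda-\Delta)|=|\alpha|$, where the two eigenvalues coincide: there one must note that normality still yields a scalar unitary $\zeta I$ rather than a nontrivial Jordan block that could produce polynomial growth, so the conclusion $\|\tilde\Psi(x)\|=\|\psi\|$ persists. Away from that boundary the two distinct unit-modulus eigenvalues give genuine oscillation without any decay. Everything else is bookkeeping; the essential mechanism is that $|\cos(\lambda-\Delta)|\le|\alpha|$ makes the tail transfer matrices norm-preserving, so the putative generalized eigenfunction cannot decay and hence cannot be $\ell^2$.
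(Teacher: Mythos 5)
Your overall strategy is the same as the paper's: show that on the constant tail the transfer matrices have only unimodular eigenvalues, conclude that $\tilde\Psi$ cannot decay there, and invoke Corollary \ref{CORO_A}; the computation $|\zeta_{x,\pm}|=1$ is correct. However, the step you lean on --- that each tail matrix $T_x(\lambda)$ is \emph{unitary} because it is normal with unimodular spectrum --- fails: $T_x(\lambda)$ is not normal when $\beta_x\neq 0$ and $\sin(\lambda-\Delta_x)\neq 0$, and in the regime $|\cos(\lambda-\Delta)|\le|\alpha|<1$ of the lemma the latter is automatic. Concretely, writing $\theta=\lambda-\Delta$, one checks $TT^*\neq T^*T$ unless $\beta\bigl(e^{i\theta}-e^{-i\theta}\bigr)=0$; equivalently, the two eigenvectors in (\ref{A03_a}) have inner product $2|\beta|^2\neq 0$, so they are not orthogonal and $T$ cannot be unitary (for the Hadamard tail $\alpha=\beta=1/\sqrt2$ the columns of $T$ have norm $\sqrt3$). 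Hence your identity $\|\tilde\Psi(x)\|=\|\psi\|$ is false, and your treatment of the borderline case $|\cos(\lambda-\Delta)|=|\alpha|$ is also wrong: there $T$ is a nontrivial Jordan block $\zeta(I+M)$ with $M\neq 0$, not the scalar $\zeta I$.

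The conclusion survives with a slightly different justification, which is in effect what the paper's terse proof is doing. Since the coin is constant on the tail, $\tilde\Psi(N+1+n)=T^{n}\psi$ for a single fixed matrix $T$ with both eigenvalues on the unit circle. If the eigenvalues are distinct, write $T=PDP^{-1}$ with $D$ unitary diagonal (but $P$ not unitary); then $\|T^{n}\psi\|\ge\|P\|^{-1}\|D^{n}P^{-1}\psi\|=\|P\|^{-1}\|P^{-1}\psi\|>0$ for all $n$. If they coincide, $T^{n}=\zeta^{n}(I+n\zeta^{-1}M)$ and $\|T^{n}\psi\|$ is constant or grows linearly, so it is again bounded below. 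Either way $\|\tilde\Psi(x)\|$ does not tend to $0$ along the tail and $\tilde\Psi\notin\ell^2(\mathbb{Z};\mathbb{C}^2)$. Note that it is this constancy of the tail (powers of one matrix), not unitarity or mere invertibility of each factor, that makes the argument work; a product of distinct invertible matrices each having a unimodular eigenvalue could still decay.
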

\begin{proof}
Under the assumption, from (\ref{A02_a}) and $|\cos(\lambda -\Delta)|\leq|\alpha|$, we can see that $|\zeta_{x,\pm}|=1$ holds for $x>N$ or $x<-N$.
Then, $\tilde\Psi(x)$ defined by (\ref{A01_b}) does not converge to $\bm{0}$ since $\ker T_{x}(\lambda)=\{\bm{0}\}$ and $|\zeta_{x,+}|$ or $|\zeta_{x,-}|$ equals $1$.
Thus, $\tilde\Psi$ is not included in $\ell^2(\mathbb{Z};\mathbb{C}^2)$, and Corollary \ref{CORO_A} leads to the statement.
\qed
\end{proof}
\subsection{Two-phase quantum walks with one defect}
In this paper, we consider two-phase QWs with one defect defined as  
\begin{align}
	\label{B01_a}
	(\alpha_x, \beta_x, \Delta_x)=
	\begin{cases}
	(\alpha_m, \beta_m, \Delta_m),\quad &x<0,
	\\
	(\alpha_o, \beta_o, \Delta_o),\quad &x=0,
	\\
	(\alpha_p, \beta_p, \Delta_p),\quad &x>0,
	\end{cases}
	\end{align}
    where $\alpha_j, \beta_j\in\mathbb{C},\ \Delta_j\in[0,2\pi),\ |\alpha_j|^2+|\beta_j|^2=1$ and $\alpha_j\neq 0$ for $j\in \{p, o, m\}.$
    Similarly, we write $T_x(\lambda)=T_j(\lambda),\ \zeta_{x,\pm}=\zeta_{j,\pm},\ |v_{x,\pm}\rangle = |v_{j,\pm}\rangle$, where $j=p\ (x>0)$, $=o\ (x=0)$, $=m\ (x<0)$.
\begin{theorem}
	\label{THEO_1}
	For $\lambda\in[0,2\pi )$, $e^{i\lambda}\in\sigma_p(U)$ if and only if the condition $|\cos(\lambda-\Delta_j)|>|\alpha_j|\ (\ j\in\{p,m\})$ and $\det D(\lambda)=0$ hold. Here $D(\lambda)$ is defined by	\begin{align*}
	    D(\lambda)
	    =
	    \begin{bmatrix}
	    \langle v_{p,s_p}^\perp, v_{o,+}\rangle\zeta_{o,+} & 
	    \langle v_{p,s_p}^\perp, v_{o,-}\rangle\zeta_{o,-}
	    \\
    	\langle v_{m,s_m}^\perp, v_{o,+}\rangle & 
	    \langle v_{m,s_m}^\perp, v_{o,-}\rangle
	    \end{bmatrix},
	\end{align*}where $|v_{j,\pm}^\perp\rangle$ is a non-zero vector satisfying $\langle v_{j,\pm}^\perp ,v_{j,\pm}\rangle =0\ (j\in\{p,m\})$, and  $s_p,s_m$ are plus or minus sign determined by 
	\begin{align*}
	s_p=
	\begin{cases}
	+,\quad & -\cos(\lambda -\Delta_p)>0,
	\\
	-,\quad & -\cos(\lambda -\Delta_p)<0,
	\end{cases}
	\qquad
	s_m=
	\begin{cases}
	+,\quad & \cos(\lambda -\Delta_m)>0,
	\\
	-,\quad & \cos(\lambda -\Delta_m)<0.
	\end{cases}
	\end{align*}	In particular, if the condition holds, the multiplicity of the eigenvalue equals $1$.
	\end{theorem}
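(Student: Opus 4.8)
The plan is to apply Corollary \ref{CORO_A}, which reduces the eigenvalue problem to the existence of a nonzero $\varphi\in\mathbb{C}^2$ for which the sequence $\tilde\Psi$ generated by the transfer matrices lies in $\mathcal{H}=\ell^2(\mathbb{Z};\mathbb{C}^2)$. Since membership in $\ell^2$ is equivalent to decay of $\tilde\Psi(x)$ both as $x\to+\infty$ and as $x\to-\infty$, I would treat the two half-lines separately and then glue the resulting constraints on $\varphi$ at the origin, where $\varphi=\tilde\Psi(0)$ is expanded in the eigenbasis of $T_o(\lambda)$.

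For the positive half-line the homogeneity of the model gives $\tilde\Psi(x)=T_p(\lambda)^{x-1}T_o(\lambda)\varphi$ for $x\ge 1$. Because $T_p(\lambda)$ is normal, hence diagonalizable, the first step is to read off the moduli of its eigenvalues $\zeta_{p,\pm}$ from (\ref{A02_a}). Using $\zeta_{p,+}\zeta_{p,-}=\overline{\alpha_p}/\alpha_p$ one sees $|\zeta_{p,+}\zeta_{p,-}|=1$, and under the hypothesis $|\cos(\lambda-\Delta_p)|>|\alpha_p|$ the discriminant is a positive real, so exactly one eigenvalue has modulus strictly below $1$ and the other strictly above $1$. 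A short sign computation, distinguishing $\cos(\lambda-\Delta_p)>0$ from $\cos(\lambda-\Delta_p)<0$, then shows that the contracting eigenvalue is precisely $\zeta_{p,s_p}$ with $s_p$ as defined in the statement. Consequently $\tilde\Psi(x)\to\bm{0}$ as $x\to+\infty$ if and only if $T_o(\lambda)\varphi$ has no component along the expanding eigenvector, i.e. $T_o(\lambda)\varphi$ is parallel to $|v_{p,s_p}\rangle$, which in $\mathbb{C}^2$ is equivalent to $\langle v_{p,s_p}^\perp, T_o(\lambda)\varphi\rangle=0$.

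The negative half-line is handled by the same scheme with $T_m(\lambda)^{-1}$ in place of $T_p(\lambda)$: here $\tilde\Psi(-n)=T_m(\lambda)^{-n}\varphi$, so decay as $x\to-\infty$ forces $\varphi$ into the eigenspace of $T_m(\lambda)$ on which $|\zeta|>1$, which the analogous sign analysis identifies as $|v_{m,s_m}\rangle$, giving $\langle v_{m,s_m}^\perp, \varphi\rangle=0$. Expanding $\varphi=a_+|v_{o,+}\rangle+a_-|v_{o,-}\rangle$ and using $T_o(\lambda)|v_{o,\pm}\rangle=\zeta_{o,\pm}|v_{o,\pm}\rangle$ turns the two orthogonality constraints into the homogeneous system $D(\lambda)\begin{bmatrix}a_+\\a_-\end{bmatrix}=\bm{0}$, so a nonzero $\varphi$ exists if and only if $\det D(\lambda)=0$. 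For necessity I would invoke Lemma \ref{LEM_A} to see that $|\cos(\lambda-\Delta_j)|>|\alpha_j|$ for $j\in\{p,m\}$ is forced whenever $e^{i\lambda}\in\sigma_{\rm p}(U)$, so the eigenvalue moduli are genuinely off the unit circle and the above decomposition is valid; for sufficiency I would run the equivalences in reverse, checking that the constructed $\varphi$ is nonzero because $(a_+,a_-)\neq\bm{0}$ and $|v_{o,\pm}\rangle$ are independent.

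Finally, the multiplicity claim follows because when $\det D(\lambda)=0$ the solution space of the system is one-dimensional, provided $D(\lambda)\neq O$, so $\varphi$ and hence the eigenfunction is unique up to a scalar. I expect the main obstacle to be the bookkeeping that pins the signs $s_p,s_m$ to the correct contracting and expanding eigenvectors, together with the degenerate configurations that must be checked by hand: the case $\beta_j=0$, where the eigenvectors take the special form (\ref{A04_a}), and the possibility of a repeated eigenvalue of $T_o(\lambda)$ when $\cos^2(\lambda-\Delta_o)=|\alpha_o|^2$, where the eigenbasis expansion of $\varphi$ and the non-vanishing of $D(\lambda)$ require separate justification.
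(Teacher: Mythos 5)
Your argument is essentially identical to the paper's proof: both reduce to Corollary \ref{CORO_A}, invoke Lemma \ref{LEM_A} to force $|\cos(\lambda-\Delta_j)|>|\alpha_j|$ for $j\in\{p,m\}$, identify $\zeta_{p,s_p}$ (resp.\ $\zeta_{m,s_m}$) as the contracting (resp.\ expanding) eigenvalue, and convert the two decay conditions into the linear system $D(\lambda)\,{}^T[a_+\ \ a_-]=\bm{0}$ in the $T_o$-eigenbasis, with multiplicity one following from $D(\lambda)\neq O$. The degenerate situation you flag --- $\cos^{2}(\lambda-\Delta_o)=|\alpha_o|^{2}$ with $\beta_o\neq 0$, where $|v_{o,+}\rangle=|v_{o,-}\rangle$ and the expansion of $\varphi$ breaks down --- is likewise left untreated in the paper, so your proposal is no less complete than the published argument.
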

	\begin{proof}
    We show that $\det D(\lambda)=0$ is equivalent to the existence of $\varphi\in\mathbb{C}^2$ satisfying $\tilde \Psi\in\mathcal{H}$ in Corollary \ref{CORO_A}. Since we consider the two-phase QWs with one defect defined by (\ref{B01_a}), $\tilde\Psi$ is given as \begin{align}
    \label{B03_a}
    \tilde{\Psi}(x)=\left\{\begin{array}{ll}
T_{p}^{x-1}(\lambda) T_{0}(\lambda) \varphi, & x>0, \\
\varphi, & x=0, \\
T_{m}^{x}(\lambda) \varphi, & x<0.
\end{array}\right.
    \end{align}
Let $\varphi = a |v_{o,+}\rangle + b |v_{o,-}\rangle,\ a,b\in\mathbb{C}$, by the change of basis, we get
	\begin{align*}
	|v_{o,\pm}\rangle = 
	\frac{\langle v_{j,-}^\perp, v_{o,\pm}\rangle}{\langle v_{j,-}^\perp, v_{j,+}\rangle}
	|v_{j,+}\rangle
	+
	\frac{\langle v_{j,+}^\perp, v_{o,\pm}\rangle}{\langle v_{j,+}^\perp, v_{j,-}\rangle}
	|v_{j,-}\rangle
	,
	\quad
	j\in\{p,m\}.
	\end{align*}Therefore, from (\ref{B03_a}), we obtain

	\begin{align}
	\label{B04_a}
	\tilde\Psi(x)
	=
	\begin{cases}
	&\dfrac{\zeta_{o,+}\langle v_{p,-}^\perp, v_{o,+}\rangle
	\,a
	+
	\zeta_{o,-}\langle v_{p,-}^\perp, v_{o,-}\rangle
	\,b
	}
	{\langle v_{p,-}^\perp, v_{p,+}\rangle}
	\, \zeta_{p,+}^{x-1}
	|v_{p,+}\rangle
	\\[+15pt]
	&
	\hspace{1.1cm}
	+
	\dfrac{\zeta_{o,+}\langle v_{p,+}^\perp, v_{o,+}\rangle
	\,a
	+
	\zeta_{o,-}\langle v_{p,+}^\perp, v_{o,-}\rangle
	\,b
	}
	{\langle v_{p,+}^\perp, v_{p,-}\rangle}
	\, \zeta_{p,-}^{x-1}
	|v_{p,-}\rangle
	,\ x>0,
	\\[+25pt]
	&\dfrac{\langle v_{m,-}^\perp, v_{o,+}\rangle
	\, a
	+
	\langle v_{m,-}^\perp, v_{o,-}\rangle
	\, b
	}
	{\langle v_{m,-}^\perp, v_{m,+}\rangle}
	\, \zeta_{m,+}^{x}
	|v_{m,+}\rangle
	\\[+15pt]
	&
	\hspace{2cm}
	+
	\dfrac{\langle v_{m,+}^\perp, v_{o,+}\rangle
	\, a
	+
	\langle v_{m,+}^\perp, v_{o,-}\rangle
	\, b
	}
	{\langle v_{m,+}^\perp, v_{m,-}\rangle}
	\, \zeta_{m,-}^{x}
	|v_{m,-}\rangle
	,\hspace{0.3cm}x<0.
	\end{cases}
	\end{align}From Lemma \ref{LEM_A}, it is sufficient to consider the case of $|\cos(\lambda -\Delta_j)|>|\alpha_j|,\ j\in\{p,m\}.$
In this case, from (\ref{A02_a}), we can say that either $|\zeta_{j,+}|$ or $|\zeta_{j,-}|$ is greater than 1 and the other is less than 1, namely, if $\cos(\lambda -\Delta_j)>|\alpha_j|$,
then $|\zeta_{j,+}|>1$ and $|\zeta_{j,-}|<1$ hold, and if $\cos(\lambda -\Delta_j)<-|\alpha_j|$, $|\zeta_{j,+}|<1$ and $|\zeta_{j,-}|>1$ hold.
Therefore,
\begin{align*}
	    \zeta_{o,+}\langle v_{p,s_p}^\perp, v_{o,+}\rangle\, a+
	    \zeta_{o,-}\langle v_{p,s_p}^\perp, v_{o,-}\rangle\, b
	    =
    	\langle v_{m,s_m}^\perp, v_{o,+}\rangle \,a+
	    \langle v_{m,s_m}^\perp, v_{o,-}\rangle \,b =0,
	\end{align*} which is equivalent to ${}^T[a \ \    b]\in\ker D(\lambda)$ is necessary for $\tilde\Psi\in\mathcal{H}$. It is also sufficient for $\tilde\Psi\in\mathcal{H}$ since $\tilde{\Psi}$ with ${}^T[a \ \    b]\in\ker D(\lambda)$ satisfies $\|\tilde\Psi(x)\|^2=O(c^{|2x|})$ for a constant $|c|<1$. Thus, $\det D(\lambda)=0$ is a necessary and sufficient condition for $e^{i\lambda}\in\sigma_{\rm p}(U)$ from Corollary \ref{CORO_A}. Additionally, from Proposition \ref{PROP_Transfer}, eigenfunctions of $U$ are given by $J^{-1}\tilde\Psi$. Therefore, the dimension of $\ker(U-e^{i\lambda})$ is determined by the freedom of choice of $\varphi$. Since $\varphi$ is determined by $a,b$, and $D(\lambda)$ is not a zero matrix,
	 $\dim\ker(U-e^{i\lambda})=\dim\ker D(\lambda)=1$ holds if ${}^T[a \ \    b]\in\ker D(\lambda)$.
\qed
	\end{proof}

\section{Main Theorems}
\label{sec:3}
	In this section, we use Theorem \ref{THEO_1} to analyse eigenvalues for 5 models, and 4 of them include the models in the previous studies.
	\begin{theorem}
	\label{THEO_QW1}
	We assume $(\alpha_p, \beta_p, \Delta_p) = (\alpha_m, \beta_m, \Delta_m) = (\alpha, \beta, \Delta)$ and $\Delta_o = \Delta$. 
	$\sigma_p(U)\neq \emptyset$ if and only if $|\beta|^2 >\Re(\beta\overline{\beta_0})$ holds, and all eigenvalues of $U$ are given by:
	\begin{align*}
e^{i\lambda _{1}}  =\frac{\left( \Re \left(\overline{\beta _{0}} \beta \right) -1\right) +i\sqrt{|\beta |^{2} -\Re ^{2}\left(\overline{\beta _{0}} \beta \right)}}{\sqrt{1+|\beta |^{2} -2\Re \left(\overline{\beta _{0}} \beta \right)}} e^{i\Delta }
,\qquad 
e^{i\lambda _{2}}  =-e^{i\lambda _{1}},
\\
e^{i\lambda _{3}}  =\frac{\left( \Re \left(\overline{\beta _{0}} \beta \right) -1\right) -i\sqrt{|\beta |^{2} -\Re ^{2}\left(\overline{\beta _{0}} \beta \right)}}{\sqrt{1+|\beta |^{2} -2\Re \left(\overline{\beta _{0}} \beta \right)}} e^{i\Delta },\qquad 
e^{i\lambda _{4}}  =-e^{i\lambda _{3}}.
	\end{align*}
	\end{theorem}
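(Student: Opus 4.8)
The plan is to specialize Theorem \ref{THEO_1} to the present coefficients and then solve $\det D(\lambda)=0$ explicitly. Because $(\alpha_p,\Delta_p)=(\alpha_m,\Delta_m)=(\alpha,\Delta)$, the two bulk conditions $|\cos(\lambda-\Delta_j)|>|\alpha_j|$ collapse to the single requirement $|\cos(\lambda-\Delta)|>|\alpha|$, and, since also $\Delta_o=\Delta$, it is natural to put $\theta:=\lambda-\Delta$ so that $\lambda-\Delta_o=\lambda-\Delta_p=\lambda-\Delta_m=\theta$. The bulk eigenvectors then coincide, $v_{p,\pm}=v_{m,\pm}=:v_\pm$, while the sign selectors are opposite: $s_m=\operatorname{sgn}(\cos\theta)$ and $s_p=-s_m$. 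First I would record these reductions and set $R:=\sqrt{\cos^2\theta-|\alpha|^2}\ge 0$, which is real precisely under the bulk condition.

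The key structural step is to read off what $\det D(\lambda)=0$ means invariantly. Writing the two rows of $D(\lambda)$ as $\langle v_{p,s_p}^\perp,\,T_o(\lambda)\varphi\rangle$ and $\langle v_{m,s_m}^\perp,\,\varphi\rangle$ for $\varphi=a\,v_{o,+}+b\,v_{o,-}$ (using $T_o v_{o,\pm}=\zeta_{o,\pm}v_{o,\pm}$), a nontrivial kernel vector corresponds to a nonzero $\varphi$ with $\varphi\parallel v_{m,s_m}$ and $T_o(\lambda)\varphi\parallel v_{p,s_p}$. Hence $\det D(\lambda)=0$ is equivalent to the single parallelism condition $T_o(\lambda)v_{s_m}\parallel v_{-s_m}$. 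I would then compute $\det\!\big[\,T_o(\lambda)v_{s}\mid v_{-s}\,\big]$ directly from the matrix $T_o(\lambda)=\tfrac1{\alpha_o}\left[\begin{smallmatrix}e^{i\theta}&-\beta_o\\-\overline{\beta_o}&e^{-i\theta}\end{smallmatrix}\right]$ and the vectors $v_\pm={}^{T}[\beta,\ i\sin\theta\mp R]$, using the identity $(i\sin\theta-sR)(i\sin\theta+sR)=-|\beta|^2$. After clearing the harmless factor $\alpha_o$ and dividing by $\beta$ (the case $\beta=0$ forces $|\alpha|=1$, so $|\cos\theta|>1$ is impossible and there is no eigenvalue, consistent with the stated condition), this collapses to
\begin{align*}
sR\cos\theta=\sin^2\theta-\Re(\overline{\beta_o}\beta).
\end{align*}
Note that computing with $T_o$ itself keeps the argument valid when $\beta_o=0$.

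Since $s=s_m=\operatorname{sgn}(\cos\theta)$, the left-hand side equals $|\cos\theta|\,R\ge0$, so a solution forces $\sin^2\theta-\Re(\overline{\beta_o}\beta)\ge0$; squaring and using $|\alpha|^2=1-|\beta|^2$ I expect to obtain, with $r:=\Re(\overline{\beta_o}\beta)$,
\begin{align*}
\cos^2\theta=\frac{(1-r)^2}{1+|\beta|^2-2r},\qquad \sin^2\theta=\frac{|\beta|^2-r^2}{1+|\beta|^2-2r}.
\end{align*}
Two bookkeeping identities then finish the analysis: $\cos^2\theta-|\alpha|^2=(|\beta|^2-r)^2/(1+|\beta|^2-2r)$, which secures the strict bulk condition exactly when $|\beta|^2>r$, and $\sin^2\theta-r=(1-r)(|\beta|^2-r)/(1+|\beta|^2-2r)$, whose sign (as $1-r>0$ and the denominator is positive) equals that of $|\beta|^2-r$. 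The latter is what makes the unsquared equation solvable, yielding $\sigma_{\rm p}(U)\neq\emptyset\iff|\beta|^2>\Re(\beta\overline{\beta_o})$. Because the equation only involves $\cos^2\theta$, $\sin^2\theta$ and $|\cos\theta|$, all four sign choices $(\cos\theta,\sin\theta)=(\pm\sqrt{\cos^2\theta},\pm\sqrt{\sin^2\theta})$ satisfy it, and substituting into $e^{i\lambda}=e^{i\theta}e^{i\Delta}$ reproduces the four listed values; they are distinct since $\cos^2\theta\notin\{0,1\}$ (indeed $\cos^2\theta>|\alpha|^2>0$, while $\alpha_o\neq0$ gives $|\beta|^2>r^2$, hence $\sin^2\theta>0$), and each is simple by the last clause of Theorem \ref{THEO_1}.

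The main obstacle I anticipate is the bookkeeping in the last two paragraphs: one must keep the branch $s=\operatorname{sgn}(\cos\theta)$ and the nonnegative root $R$ aligned so that squaring is reversible, and then extract the clean criterion $|\beta|^2>\Re(\beta\overline{\beta_o})$ from the sign of $\sin^2\theta-r$ rather than from the (automatically nonnegative) quantity $\sin^2\theta$ itself. Equally delicate is arguing that all four sign combinations yield genuine eigenvalues instead of spurious roots introduced by squaring; this is exactly where the identity $\sin^2\theta-r=(1-r)(|\beta|^2-r)/(1+|\beta|^2-2r)>0$ is used.
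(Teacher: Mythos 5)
Your proposal is correct and follows exactly the route the paper intends for its Section 3 results: specialize Theorem \ref{THEO_1}, reduce $\det D(\lambda)=0$ to the parallelism condition $T_o(\lambda)v_{s_m}\parallel v_{-s_m}$, and solve the resulting scalar equation $sR\cos\theta=\sin^2\theta-\Re(\overline{\beta_o}\beta)$; I verified the algebra (the identity $(i\sin\theta-sR)(i\sin\theta+sR)=-|\beta|^2$, the formulas for $\cos^2\theta$ and $\sin^2\theta$, and the sign analysis yielding the criterion $|\beta|^2>\Re(\beta\overline{\beta_o})$), and the four sign choices reproduce precisely the four listed eigenvalues.
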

   
	\begin{theorem}
	\label{THEO_QW2}
	We assume $(\alpha_p, \beta_p, \Delta_p) = (\alpha_m, \beta_m, \Delta_m) = (\alpha, \beta, \Delta)$ and $(\alpha_o, \beta_o) = (\alpha, \beta)$.
	$\sigma_p(U)\neq \emptyset$ if and only if Condition 1 or Condition 2 holds.
	\begin{enumerate}
	\item[]{\ $\bullet$\ \rm Condition 1 :}\quad $|\beta|\cos(\Delta_o-\Delta)-|\alpha|\sin(\Delta_o-\Delta)<|\beta|$.
	\item[]{\ $\bullet$\ \rm Condition 2 :}\quad $|\beta|\cos(\Delta_o-\Delta)+|\alpha|\sin(\Delta_o-\Delta)<|\beta|$.
	\end{enumerate}
	Then all eigenvalues of $U$ are given by the followings:
	\\[+5pt]
	\ $\bullet$\ If Condition 1 holds, $e^{i\lambda_1}, e^{i\lambda_2}\in\sigma_{\rm p}(U)$, where
	\begin{align*}
	e^{i\lambda_1}=
	\dfrac
	{|\beta|(|\beta|+i|\alpha|)e^{i\Delta_o}-e^{i\Delta}}
	{\left|
	|\beta|(|\beta|+i|\alpha|)e^{i\Delta_o}-e^{i\Delta}
	\right|},\quad e^{i\lambda_2}=-e^{i\lambda_1}.
	\end{align*}
	$\bullet$\ If Condition 2 holds, $e^{i\lambda_3}, e^{i\lambda_4}\in\sigma_{\rm p}(U)$, where
	\begin{align*}
	e^{i\lambda_3}=
	\dfrac
	{|\beta|(|\beta|-i|\alpha|)e^{i\Delta_o}-e^{i\Delta}}
	{\left|
	|\beta|(|\beta|-i|\alpha|)e^{i\Delta_o}-e^{i\Delta}
	\right|}, \quad e^{i\lambda_4}=-e^{i\lambda_3}.
	\end{align*}
	In particular, if both of Condition 1 and Condition 2 hold, $U$ has 4 eigenvalues written above.
	\end{theorem}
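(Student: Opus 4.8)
The plan is to feed the special structure of this model into Theorem \ref{THEO_1} and collapse the $2\times2$ determinant condition to a single real trigonometric equation that can be solved in closed form. Write $\theta=\lambda-\Delta$, $\theta_o=\lambda-\Delta_o$, $\phi=\Delta_o-\Delta$, and let $\psi\in[0,\pi/2]$ be defined by $\cos\psi=|\beta|$, $\sin\psi=|\alpha|$ (possible since $|\alpha|^2+|\beta|^2=1$); I assume $\beta\neq0$ so that the eigenvector formula (\ref{A03_a}) applies at every site. Because the positive and negative phases coincide, the transfer matrices obey $T_p(\lambda)=T_m(\lambda)=:T(\lambda)$, so $v_{p,\pm}=v_{m,\pm}=:v_\pm$, the eigenvalues $\zeta_{p,\pm}=\zeta_{m,\pm}$ agree, and $R:=\sqrt{\cos^2\theta-|\alpha|^2}$. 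The sign rule in Theorem \ref{THEO_1} then forces $s_p$ and $s_m$ to be opposite: $(s_p,s_m)=(-,+)$ when $\cos\theta>0$ and $(+,-)$ when $\cos\theta<0$. My first step is to record the explicit vectors $v_\pm,v_\pm^\perp,v_{o,\pm}$ and scalars $\zeta_{o,\pm}$ from (\ref{A02_a})--(\ref{A03_a}).

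The core computation is to expand $\det D(\lambda)=0$. Writing $w_\pm=i\sin\theta_o\mp\sqrt{\cos^2\theta_o-|\alpha|^2}$ for the second components of $v_{o,\pm}$, I would use $w_+w_-=-|\beta|^2$, together with $\zeta_{o,+}\zeta_{o,-}=\overline{\alpha}/\alpha$, $\zeta_{o,+}-\zeta_{o,-}=2\sqrt{\cos^2\theta_o-|\alpha|^2}/\alpha$ and $\zeta_{o,+}+\zeta_{o,-}=2\cos\theta_o/\alpha$. Remarkably the factor $1/\alpha$ then cancels, so the a priori complex condition $\det D(\lambda)=0$, after dividing out $\beta^2$ and the common factor $\zeta_{o,+}-\zeta_{o,-}$ (legitimate exactly when $\cos^2\theta_o\neq|\alpha|^2$), collapses to the single real equation
\begin{align*}
\sin\theta\,\sin\theta_o-|\beta|^2=\mathrm{sgn}(\cos\theta)\sqrt{\cos^2\theta-|\alpha|^2}\,\cos\theta_o .
\end{align*}
Squaring and using $|\alpha|^2+|\beta|^2=1$ with $\theta_o=\theta-\phi$ reduces this to the clean quadratic
\begin{align*}
\sin^2\theta-2|\beta|^2\sin\theta\sin\theta_o+|\beta|^2\sin^2\theta_o=|\alpha|^2|\beta|^2 ,
\end{align*}
whose roots I would exhibit by verifying that $e^{i\theta}\propto|\beta|e^{i(\phi\pm\psi)}-1$ satisfy it. Translating back via $e^{i\lambda}=e^{i\Delta}e^{i\theta}$ and $|\beta|e^{\pm i\psi}=|\beta|(|\beta|\pm i|\alpha|)$ reproduces precisely the four candidates $\pm e^{i\lambda_1}$, $\pm e^{i\lambda_3}$ of the statement.

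Two points remain. First, the $\pm$ pairing: the eigenvalue equation is invariant under $\lambda\mapsto\lambda+\pi$, which flips the signs of all sines and cosines while fixing $R$, so solutions occur in antipodal pairs, giving $e^{i\lambda_2}=-e^{i\lambda_1}$ and $e^{i\lambda_4}=-e^{i\lambda_3}$; viewing the quadratic in $z=e^{i\lambda}$ as a degree-four polynomial shows these four points exhaust its roots, which supplies the ``only if'' direction. Second---and this is where the real care is needed---squaring discarded the sign in the displayed equation, so I must re-impose $\mathrm{sgn}(\cos\theta)=\mathrm{sgn}(\rho)$, where $\rho=(\sin\theta\sin\theta_o-|\beta|^2)/\cos\theta_o$, alongside the genuine constraint $|\cos\theta|>|\alpha|$ from Theorem \ref{THEO_1}. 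A short computation gives, at the candidate with $\chi=\phi\pm\psi$,
\begin{align*}
\cos^2\theta-|\alpha|^2=\frac{|\beta|^2\bigl(\cos(\phi\pm\psi)-|\beta|\bigr)^2}{\,1+|\beta|^2-2|\beta|\cos(\phi\pm\psi)\,},
\end{align*}
so the constraint holds iff $\cos(\phi\pm\psi)\neq|\beta|$, while the induced $\rho$ carries the correct sign iff $\cos(\phi\pm\psi)<|\beta|$; since $\cos(\phi+\psi)=|\beta|\cos\phi-|\alpha|\sin\phi$ and $\cos(\phi-\psi)=|\beta|\cos\phi+|\alpha|\sin\phi$, these are exactly Condition 1 and Condition 2. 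The main obstacle is thus the sign bookkeeping rather than the algebra: one must correctly match each antipodal pair to its Condition and separately dispose of the degenerate case $\cos^2\theta_o=|\alpha|^2$, where the origin eigenvectors merge and the cancellation above is illegal; I would check directly that this case yields no additional eigenvalue. Simplicity of each eigenvalue is already guaranteed by the final assertion of Theorem \ref{THEO_1}.
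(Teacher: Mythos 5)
Your proposal is correct and follows the same route the paper prescribes for all of its Section~\ref{sec:3} results: specialize Theorem~\ref{THEO_1} to this coin, collapse $\det D(\lambda)=0$ to the scalar equation $\sin\theta\sin\theta_o-|\beta|^2=\mathrm{sgn}(\cos\theta)\sqrt{\cos^2\theta-|\alpha|^2}\,\cos\theta_o$, and then sort the four roots of the squared equation by the sign condition and the constraint $|\cos\theta|>|\alpha|$, which is exactly where Conditions 1 and 2 arise. The steps you defer all check out — the sign computation gives $\cos(\phi\pm\psi)<|\beta|$ as claimed, the degenerate case $\cos^2\theta_o=|\alpha|^2$ reproduces the very same scalar equation via the direct condition $T_o v_{s_m}\parallel v_{s_p}$, and your explicit exclusion of $\beta=0$ (where the theorem as stated would actually fail) is the right call.
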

\begin{figure}[H]
\begin{subfigure}{0.5\textwidth}
\vspace{0.02cm}
\centering
\includegraphics[width=0.9\linewidth, height=4.5cm]{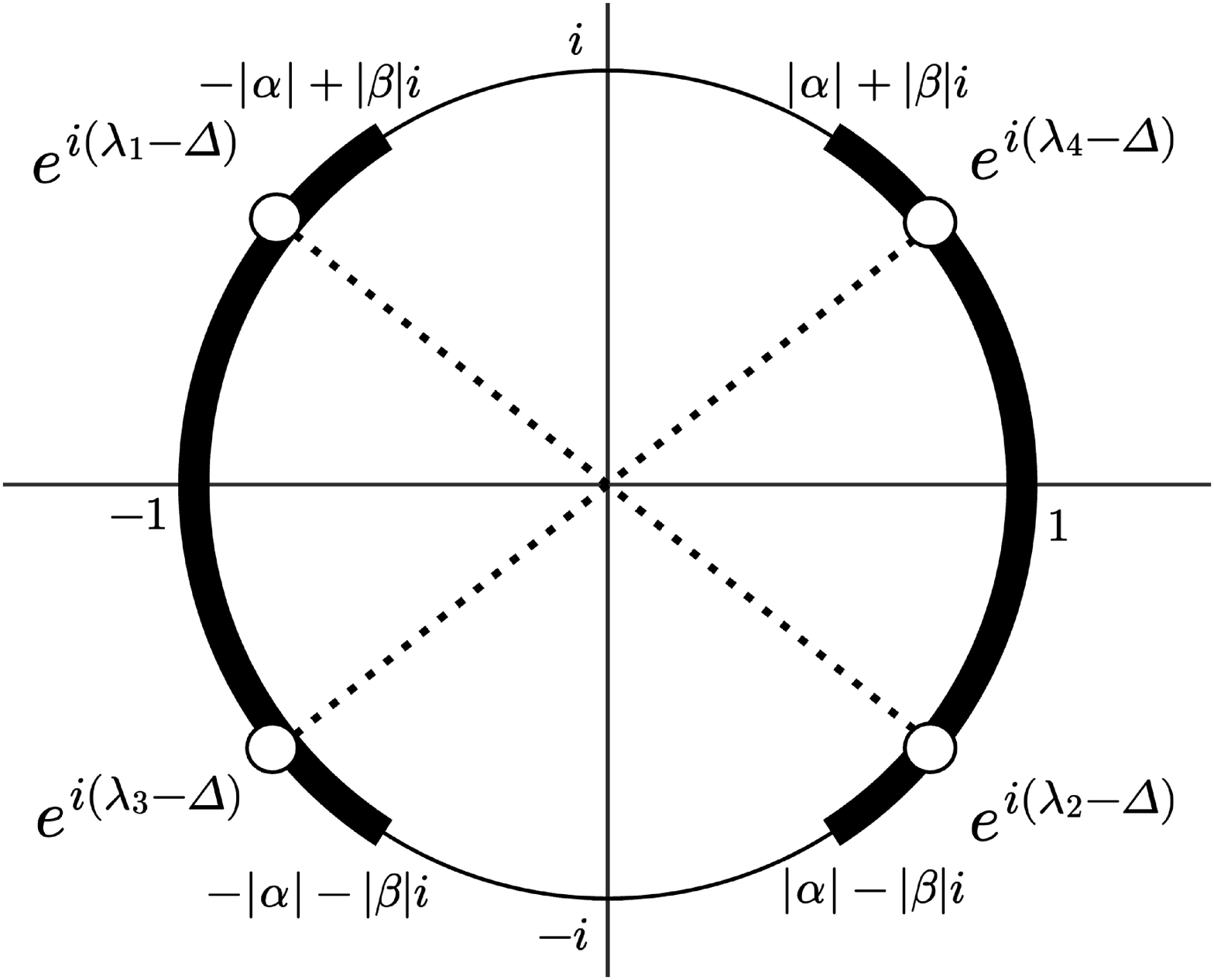} 
\caption{}
\end{subfigure}
\begin{subfigure}{0.5\textwidth}
\centering
\includegraphics[width=0.9\linewidth, height=4.5cm]{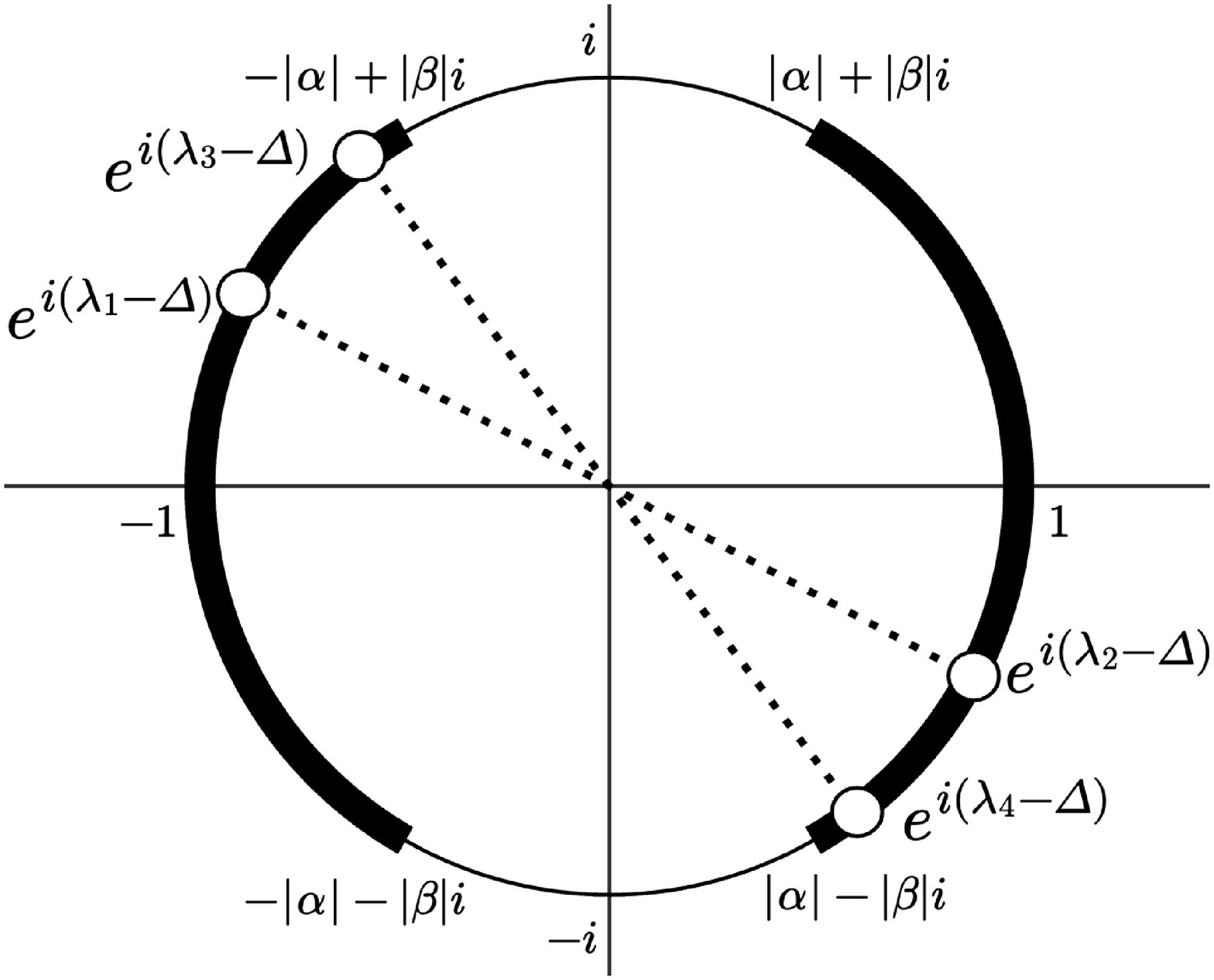}
\caption{}
\end{subfigure}
\caption{(a) and (b) illustrate eigenvalues rotated through $-\Delta$ for the one-defect QWs in Theorem \ref{THEO_QW1} and Theorem \ref{THEO_QW2}, respectively. The bold lines indicate the range of the existence of eigenvalues rotated through $-\Delta$.  }
\label{fig:1}  
\end{figure}
\begin{remark}
The models of Theorem \ref{THEO_QW1} and Theorem  \ref{THEO_QW2}  are extensions of the models in the following previous studies, respectively:
	\begin{enumerate}
	\item[]$\bullet$ Endo, Konno, Segawa, Takei (2014) \cite{EKST} : \[(\alpha,\beta,\Delta)=(\frac{i}{\sqrt{2}}, \frac{i}{\sqrt{2}}, \frac{3\pi}{2}),\qquad (\alpha_o,\beta_o,\Delta_o)=(i\cos\xi, i\sin\xi, \frac{3\pi}{2}),\quad \xi\in(0, \frac{\pi}{2}).\]
	\item[]$\bullet$
	Wojcik et al. (2012) \cite{WLKGGB},\quad 
	Endo, Konno (2014) \cite{EK} : 
	\[(\alpha,\beta,\Delta)=(\frac{i}{\sqrt{2}}, \frac{i}{\sqrt{2}}, \frac{3\pi}{2}),\quad (\alpha_o,\beta_o,\Delta_o)=(\frac{i}{\sqrt{2}}, \frac{i}{\sqrt{2}}, \frac{3\pi}{2}+2\pi\phi),\quad \phi\in(0,1).\]
	\end{enumerate}
	
	\end{remark}

\begin{theorem}
	\label{TWO_PHASE_QW1}
	We assume $(\alpha_o, \beta_o, \Delta_o)=(\alpha_p, \beta_p, \Delta_p)$ and $\arg \beta_p = \arg \beta_m$, where $\arg z$ denotes the argument of a complex number $z$.
	$\sigma_p(U)\neq \emptyset$ if and only if $\cos \left(\Delta_{m}-\Delta_{p}\right)<\left|\beta_{m}\right|\left|\beta_{p}\right|-\left|\alpha_{m}\right|\left|\alpha_{p}\right|$ holds, and all eigenvalues of $U$ are given by
	\begin{align*}
	e^{i \lambda_1}= \frac{\left|\beta_{p}\right| e^{i \Delta_{m}}-\left|\beta_{m}\right| e^{i \Delta_{p}}}{|| \beta_{p}\left|e^{i \Delta_{m}}-\right| \beta_{m}\left|e^{i \Delta_{p}}\right|},\ 
    e^{i \lambda_2}=-e^{i \lambda_1}.
	\end{align*}
	\end{theorem}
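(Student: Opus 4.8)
The plan is to specialize Theorem~\ref{THEO_1} to this model and collapse the condition $\det D(\lambda)=0$ to an explicit scalar equation. The decisive simplification is the hypothesis $(\alpha_o,\beta_o,\Delta_o)=(\alpha_p,\beta_p,\Delta_p)$, which gives $T_o(\lambda)=T_p(\lambda)$ and hence $\zeta_{o,\pm}=\zeta_{p,\pm}$ and $|v_{o,\pm}\rangle=|v_{p,\pm}\rangle$. Since $|v_{p,s_p}^\perp\rangle$ is orthogonal to $|v_{p,s_p}\rangle$, exactly one entry of the top row of $D(\lambda)$ vanishes; using $\zeta_{p,\pm}\neq 0$ and the linear independence of $|v_{p,+}\rangle,|v_{p,-}\rangle$ (valid once $\zeta_{p,+}\neq\zeta_{p,-}$, i.e.\ in the admissible regime of Lemma~\ref{LEM_A}), the equation $\det D(\lambda)=0$ reduces to the parallelism condition $|v_{p,s_p}\rangle\parallel|v_{m,s_m}\rangle$.

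First I would record the elementary identity $\bigl|\,i\sin(\lambda-\Delta_j)\mp\sqrt{\cos^2(\lambda-\Delta_j)-|\alpha_j|^2}\,\bigr|=|\beta_j|$ for the second component of $|v_{j,\pm}\rangle$, valid whenever $|\cos(\lambda-\Delta_j)|>|\alpha_j|$. With first components $\beta_p,\beta_m$ and the hypothesis $\arg\beta_p=\arg\beta_m$, the parallelism condition becomes an equality of two \emph{unit} complex numbers, which splits into an imaginary and a real part. The imaginary part reads $\sin(\lambda-\Delta_p)/|\beta_p|=\sin(\lambda-\Delta_m)/|\beta_m|$, a homogeneous linear equation in $(\cos\lambda,\sin\lambda)$ whose solutions on the unit circle are the two antipodal points $e^{i\lambda}=\pm w/|w|$ with $w=|\beta_p|e^{i\Delta_m}-|\beta_m|e^{i\Delta_p}$; these are exactly the claimed $e^{i\lambda_1}$ and $e^{i\lambda_2}=-e^{i\lambda_1}$, and one checks $w\neq 0$ whenever the asserted condition holds.

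It then remains to decide, for the candidate $\lambda_1$, whether the admissibility conditions $|\cos(\lambda_1-\Delta_j)|>|\alpha_j|$ and the leftover real part of the parallelism hold. Substituting $e^{i\lambda_1}=w/|w|$ gives $\cos(\lambda_1-\Delta_p)=(|\beta_p|c-|\beta_m|)/|w|$ and $\cos(\lambda_1-\Delta_m)=(|\beta_p|-|\beta_m|c)/|w|$, where $c=\cos(\Delta_m-\Delta_p)$. A short computation using $|\alpha_j|^2+|\beta_j|^2=1$ yields $\cos^2(\lambda_1-\Delta_j)-|\alpha_j|^2=|\beta_j|^2\bigl(|w|^2-\sin^2(\Delta_m-\Delta_p)\bigr)/|w|^2$, so the two admissibility conditions coincide and are equivalent to $\bigl(c-(|\beta_m||\beta_p|-|\alpha_m||\alpha_p|)\bigr)\bigl(c-(|\beta_m||\beta_p|+|\alpha_m||\alpha_p|)\bigr)>0$, leaving the two regimes $c<|\beta_m||\beta_p|-|\alpha_m||\alpha_p|$ and $c>|\beta_m||\beta_p|+|\alpha_m||\alpha_p|$. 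The same identity shows $\sqrt{\cos^2(\lambda_1-\Delta_p)-|\alpha_p|^2}/|\beta_p|=\sqrt{\cos^2(\lambda_1-\Delta_m)-|\alpha_m|^2}/|\beta_m|$, so the real part of the parallelism reduces to the pure sign condition $s_p=s_m$, i.e.\ to $\cos(\lambda_1-\Delta_p)\cos(\lambda_1-\Delta_m)<0$.

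The heart of the argument, and the step I expect to be most delicate, is to show that this sign condition selects precisely the first regime. I would factor $\cos(\lambda_1-\Delta_p)\cos(\lambda_1-\Delta_m)$, whose numerator is a downward parabola in $c$ with roots $|\beta_p|/|\beta_m|$ and $|\beta_m|/|\beta_p|$, hence negative exactly for $c<\min(|\beta_p|/|\beta_m|,|\beta_m|/|\beta_p|)$ on $[-1,1]$. The key estimate is the sandwiching inequality $|\beta_m||\beta_p|-|\alpha_m||\alpha_p|\le\min(|\beta_p|/|\beta_m|,|\beta_m|/|\beta_p|)\le|\beta_m||\beta_p|+|\alpha_m||\alpha_p|$, which follows from $|\alpha_j|^2+|\beta_j|^2=1$ after clearing denominators; it places the sign-change threshold between the two admissibility thresholds, so regime~I forces the product negative (eigenvalue exists) while regime~II forces it positive (no eigenvalue). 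Finally, the substitution $\lambda\mapsto\lambda+\pi$ changes the sign of each $\cos(\lambda-\Delta_j)$ (flipping every $s_j$ but preserving the admissibility inequalities) and leaves the parallelism equation $\beta_p u_m=\beta_m u_p$ invariant, so $\lambda_2=\lambda_1+\pi$ is an eigenvalue iff $\lambda_1$ is. Thus under the stated condition both $e^{i\lambda_1}$ and $e^{i\lambda_2}$ lie in $\sigma_p(U)$, and the simplicity clause of Theorem~\ref{THEO_1} gives multiplicity one, completing the equivalence.
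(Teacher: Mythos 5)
Your proposal is correct and follows exactly the route the paper intends for its Section 3 results (the paper states Theorem \ref{TWO_PHASE_QW1} without a printed proof, but explicitly says these theorems are obtained by applying Theorem \ref{THEO_1}): you specialize $D(\lambda)$, use $|v_{o,\pm}\rangle=|v_{p,\pm}\rangle$ to collapse $\det D(\lambda)=0$ to the parallelism $|v_{p,s_p}\rangle\parallel|v_{m,s_m}\rangle$, and your identities $\cos^2(\lambda_1-\Delta_j)-|\alpha_j|^2=|\beta_j|^2\bigl(|w|^2-\sin^2(\Delta_m-\Delta_p)\bigr)/|w|^2$ and the sandwiching of $\min(|\beta_p|/|\beta_m|,|\beta_m|/|\beta_p|)$ between $|\beta_m||\beta_p|\mp|\alpha_m||\alpha_p|$ all check out. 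No gaps.
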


    \begin{theorem}
	\label{TWO_PHASE_QW2}
	We assume $(\alpha_o, \beta_o, \Delta_o)=(\alpha_p, \beta_p, \Delta_p)$ and $\Delta_p = \Delta_m =\Delta$. 
	$\sigma_p(U)\neq \emptyset$ if and only if $\left(\mathfrak{R}\left(\beta_{m} \overline{\beta_{p}}\right)-\left|\beta_{p}\right|^{2}\right)\left(\mathfrak{R}\left(\beta_{m} \overline{\beta_{p}}\right)-\left|\beta_{m}\right|^{2}\right)>0$ holds, and all eigenvalues of $U$ are given by
\small	\begin{align*}
 e^{i \lambda_1}=  \frac{e^{i \Delta}\left(\sqrt{\left(\Re\left(\beta_{m} \overline{\beta_{p}}\right)+\left|\alpha_{p}\right|\left|\alpha_{m}\right|-1\right)\left(\Re\left(\beta_{m} \overline{\beta_{p}}\right)-\left|\alpha_{p}\right|\left|\alpha_{m}\right|-1\right)}+i \mathfrak{I}\left(\beta_{m} \overline{\beta_{p}}\right)\right)}{\left|\beta_{p}-\beta_{m}\right|},\ e^{i \lambda_2}=-e^{i \lambda_1}
	\end{align*}
	\end{theorem}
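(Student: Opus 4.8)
The plan is to specialize Theorem \ref{THEO_1} to these parameters and reduce the condition $\det D(\lambda)=0$ to a single linear-dependence (parallelism) condition on transfer-matrix eigenvectors. Write $\theta:=\lambda-\Delta$. The hypothesis $(\alpha_o,\beta_o,\Delta_o)=(\alpha_p,\beta_p,\Delta_p)$ gives $T_o(\lambda)=T_p(\lambda)$, hence $\zeta_{o,\pm}=\zeta_{p,\pm}$ and $|v_{o,\pm}\rangle=|v_{p,\pm}\rangle$. Since $\langle v_{p,s_p}^\perp, v_{p,s_p}\rangle=0$ by definition, the top row of $D(\lambda)$ has a vanishing entry, so expanding the determinant leaves, up to a nonzero factor, a single term:
\begin{align*}
\det D(\lambda)=\pm\,\zeta_{p,-s_p}\,\langle v_{p,s_p}^\perp, v_{p,-s_p}\rangle\,\langle v_{m,s_m}^\perp, v_{p,s_p}\rangle .
\end{align*}
Because $\zeta_{p,-s_p}\neq 0$ and $\langle v_{p,s_p}^\perp, v_{p,-s_p}\rangle\neq 0$ (the vectors $|v_{p,+}\rangle,|v_{p,-}\rangle$ are independent), $\det D(\lambda)=0$ is equivalent to $\langle v_{m,s_m}^\perp, v_{p,s_p}\rangle=0$, i.e. to $|v_{p,s_p}\rangle\parallel|v_{m,s_m}\rangle$. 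By Theorem \ref{THEO_1}, then, $e^{i\lambda}\in\sigma_{\mathrm p}(U)$ iff $|\cos\theta|>|\alpha_j|\ (j\in\{p,m\})$ and $|v_{p,s_p}\rangle\parallel|v_{m,s_m}\rangle$.

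Next I would make parallelism explicit. Denoting the second entries of $|v_{p,s_p}\rangle,|v_{m,s_m}\rangle$ by $w_p,w_m$ and using $1-|\alpha_j|^2=|\beta_j|^2$, set $R_j:=\sqrt{\cos^2\theta-|\alpha_j|^2}=\sqrt{|\beta_j|^2-\sin^2\theta}$; then $|w_j|^2=\sin^2\theta+R_j^2=|\beta_j|^2$. Hence both sides of $\beta_p w_m=\beta_m w_p$ have modulus $|\beta_p||\beta_m|$, so the equation is purely a matching of arguments. Multiplying it by $\overline{\beta_m}$ and by $\overline{\beta_p}$ and taking imaginary parts gives, with $r:=\Re(\beta_m\overline{\beta_p})$ and $s:=\Im(\beta_m\overline{\beta_p})$, the symmetric pair
\begin{align*}
\sin\theta\,(r-|\beta_m|^2)=\mp\, s\,R_m,\qquad \sin\theta\,(r-|\beta_p|^2)=\mp\, s\,R_p,
\end{align*}
the sign being $-$ when $\cos\theta>0$ and $+$ when $\cos\theta<0$.

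I would then solve these. Squaring the first and using $r^2+s^2=|\beta_p|^2|\beta_m|^2$ together with the identity $(r-|\beta_m|^2)^2+s^2=|\beta_m|^2|\beta_p-\beta_m|^2$ yields $\sin^2\theta=s^2/|\beta_p-\beta_m|^2$, whence
\begin{align*}
\cos^2\theta=\frac{(r-1)^2-|\alpha_p|^2|\alpha_m|^2}{|\beta_p-\beta_m|^2}=\frac{\bigl(r+|\alpha_p||\alpha_m|-1\bigr)\bigl(r-|\alpha_p||\alpha_m|-1\bigr)}{|\beta_p-\beta_m|^2}.
\end{align*}
Taking the $\cos\theta>0$ branch reconstructs $e^{i\lambda_1}=e^{i\Delta}(\cos\theta+i\sin\theta)$ exactly as stated, and the substitution $\theta\mapsto\theta+\pi$ (which preserves $\cos^2\theta$ and the $R_j$, interchanges the two branches, and flips the sign in the pair above) produces $e^{i\lambda_2}=-e^{i\lambda_1}$; Theorem \ref{THEO_1} simultaneously gives multiplicity one.

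For the existence criterion, dividing the two relations gives $(r-|\beta_m|^2)/(r-|\beta_p|^2)=R_m/R_p>0$, so $(r-|\beta_p|^2)(r-|\beta_m|^2)>0$ is necessary. For the converse I would invoke $r\le|\beta_p||\beta_m|$, which rules out both factors being positive, so product-positivity is equivalent to $r<|\beta_p|^2$ and $r<|\beta_m|^2$; this in turn forces $\cos^2\theta>|\alpha_j|^2$ (equivalently $r\neq|\beta_j|^2$, since $|\beta_j|^2-\sin^2\theta=(|\beta_j|^2-r)^2/|\beta_p-\beta_m|^2\ge0$) and fixes $\mathrm{sgn}(\sin\theta)=\mathrm{sgn}(s)$. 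The main obstacle is precisely this sufficiency step: one must confirm that the $\sin\theta,\cos\theta$ obtained after squaring genuinely solve the original \emph{un-squared} complex equation $\beta_p w_m=\beta_m w_p$ (not merely its modulus), which requires verifying the real parts as well and tracking the sign bookkeeping, and one must separately dispatch the degenerate cases $\beta_p\beta_m=0$ and $s=0$. Once this is settled, Theorem \ref{THEO_1} certifies that $e^{i\lambda_1},e^{i\lambda_2}$ are exactly the eigenvalues of $U$.
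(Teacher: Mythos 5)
Your strategy is the intended one and lines up with how the paper derives its Section~\ref{sec:3} results from Theorem~\ref{THEO_1}: since $T_o=T_p$, one entry in the first row of $D(\lambda)$ vanishes, $\det D(\lambda)=0$ collapses to $\langle v_{m,s_m}^\perp,v_{p,s_p}\rangle=0$, i.e.\ to parallelism of the decaying eigenvectors on the two half-lines. Your squared relations, the identity $(r-|\beta_m|^2)^2+s^2=|\beta_m|^2|\beta_p-\beta_m|^2$, the resulting formulas for $\sin^2\theta$ and $\cos^2\theta$, and the observation that $r\le|\beta_p||\beta_m|$ forces both factors of the positivity condition to be negative are all correct.

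The one step you leave open --- that the candidate $(\cos\theta,\sin\theta)$ solves the \emph{un-squared} equation $\beta_pw_m=\beta_mw_p$ --- does close, and you should write it out rather than defer it. Under $r<|\beta_p|^2$ and $r<|\beta_m|^2$ one has $R_j=(|\beta_j|^2-r)/|\beta_p-\beta_m|$ with no absolute values, and then
\begin{align*}
\beta_p w_m=\frac{\beta_p\bigl(is+r-|\beta_m|^2\bigr)}{|\beta_p-\beta_m|}
=\frac{\beta_p\bigl(\beta_m\overline{\beta_p}-|\beta_m|^2\bigr)}{|\beta_p-\beta_m|}
=\frac{\beta_p\beta_m\bigl(\overline{\beta_p}-\overline{\beta_m}\bigr)}{|\beta_p-\beta_m|},
\end{align*}
while $\beta_m w_p=\beta_m\bigl(is+|\beta_p|^2-r\bigr)/|\beta_p-\beta_m|=\beta_m\beta_p\bigl(\overline{\beta_p}-\overline{\beta_m}\bigr)/|\beta_p-\beta_m|$ by the same manipulation, so the two sides agree identically; the $\cos\theta<0$ branch follows by $\theta\mapsto\theta+\pi$ as you say. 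This computation also shows why squaring is harmless here: the only spurious root it introduces is $s=0$, $r=+|\beta_p||\beta_m|$, and that case is precisely excluded by the stated condition, since then $(r-|\beta_p|^2)(r-|\beta_m|^2)=-|\beta_p||\beta_m|(|\beta_p|-|\beta_m|)^2\le0$. The degenerate cases are absorbed the same way: $\beta_p\beta_m=0$ or $\beta_p=\beta_m$ make the product vanish, and for $\beta_p=0$ (say) the eigenvectors (\ref{A04_a}) show parallelism would force $\sin\theta=R_m=0$, contradicting $|\cos\theta|>|\alpha_m|$. With these verifications inserted, your argument is complete.
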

    
      \begin{figure}[H]
\begin{subfigure}{0.5\textwidth}
\centering
\includegraphics[width=0.9\linewidth, height=4.5cm]{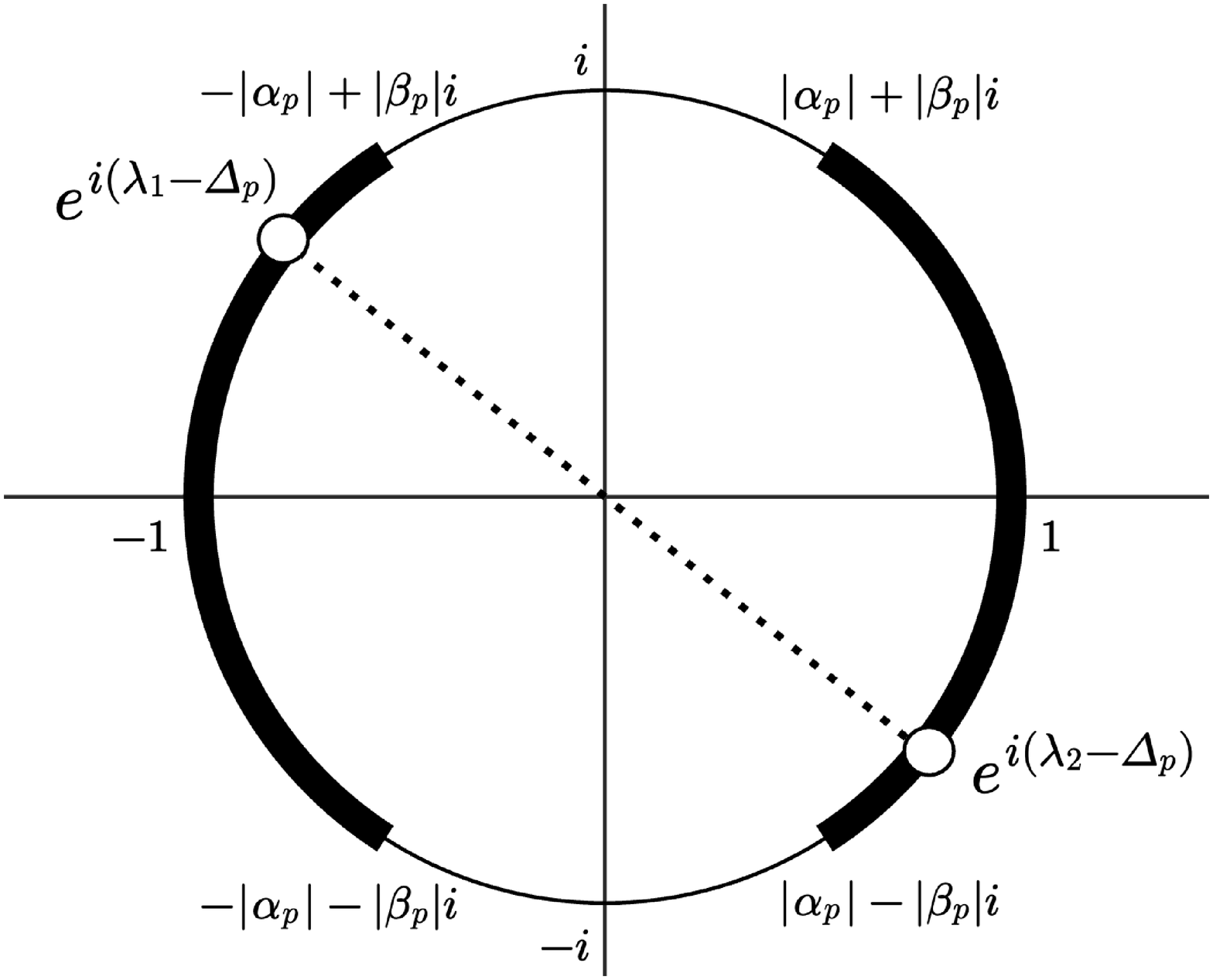} 
\caption{}
\end{subfigure}
\begin{subfigure}{0.5\textwidth}
\centering
\includegraphics[width=0.9\linewidth, height=4.5cm]{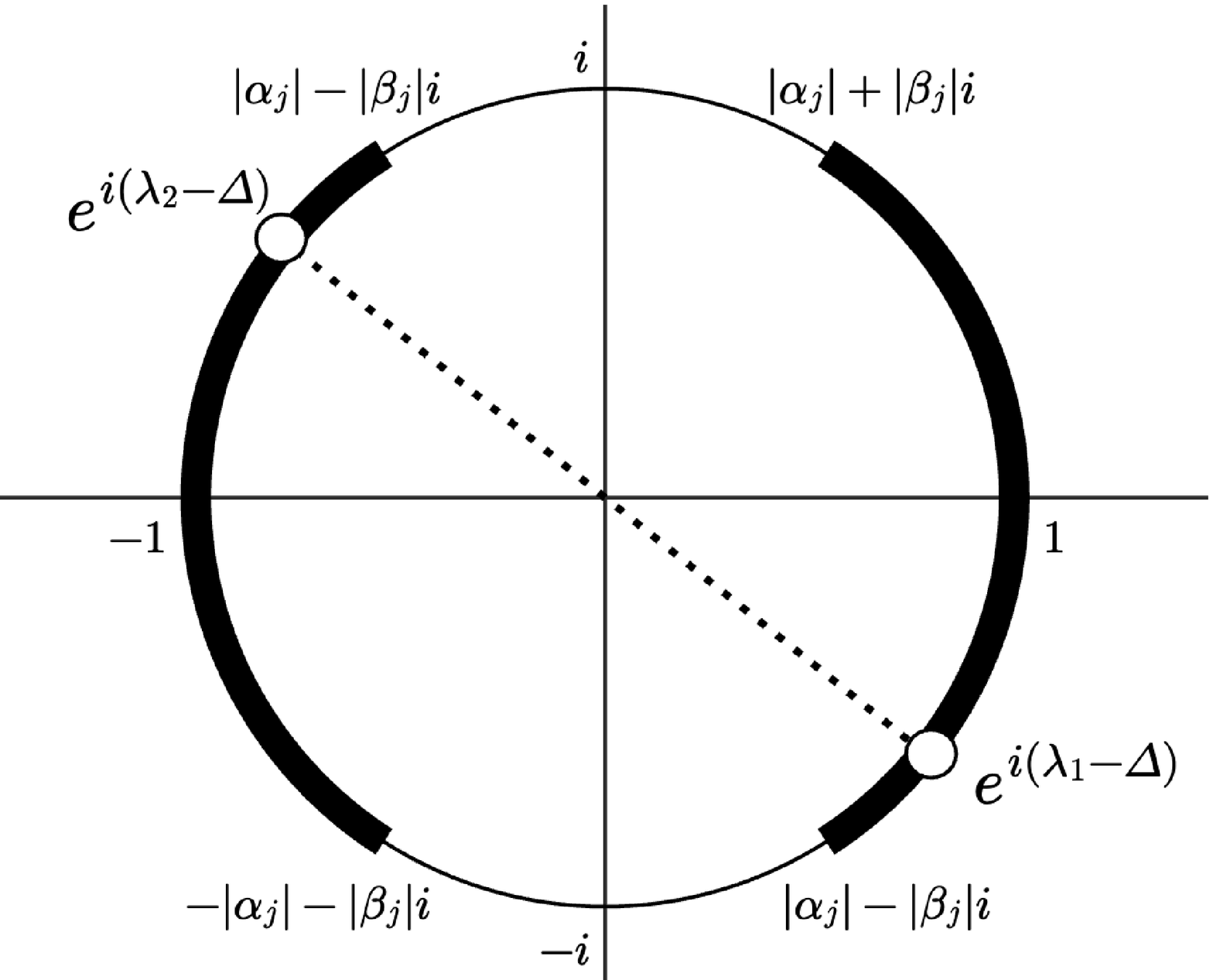}
\caption{}
\end{subfigure}
\caption{(a) and (b) illustrate eigenvalues rotated through $-\Delta_p$ and $-\Delta$ for the two-phase QWs in Theorem \ref{TWO_PHASE_QW1} and Theorem \ref{TWO_PHASE_QW2}, respectively. In (a), the bold lines indicate the range of the existence of eigenvalues rotated through $-\Delta_p$. In (b), the bold lines indicate the range of the existence of eigenvalues rotated through $-\Delta$, where $j=p\  (|\beta_p|\leq|\beta_m|),\ =m\  (|\beta_m|<|\beta_p|)$. }
\label{Fig:2}
\end{figure}

\begin{remark}
	The model in Theorem \ref{TWO_PHASE_QW2} is an extension of the model of the following previous research:
	\begin{enumerate}
	\item[]$\bullet$ Endo, Konno, Obuse (2015) \cite{EKO} : 
	\[(\alpha_p,\beta_p,\Delta_p)=(\frac{i}{\sqrt{2}}, \frac{ie^{i\sigma_+}}{\sqrt{2}}, \frac{3\pi}{2}),\  (\alpha_m,\beta_m,\Delta_m)=(\frac{i}{\sqrt{2}}, \frac{ie^{i\sigma_-}}{\sqrt{2}}, \frac{3\pi}{2})
	,\ \sigma_{\pm}\in[0,2\pi).\]
	\end{enumerate}
\end{remark}

    \begin{theorem}\label{TWO_PHASE_ONE_DEFECT_QW2}
     We assume $\beta_o=0$, $\ |\beta_p| = |\beta_m|=|\beta|$  and $\Delta_p=\Delta_m =\Delta$. Let $C=\Delta_{0}+(\arg \beta_p - \arg \beta_{m})/2$, $\sigma_p(U)\neq \emptyset$  holds. All eigenvalues of $U$ are given by the following conditions:
     \begin{enumerate}
	\item[]\ $\bullet$\ \rm Condition 1 :\quad  $\sin( \Delta -C) \in [ -1,|\beta |)$.
	\item[]\ $\bullet$\ \rm Condition 2 :\quad $\sin( \Delta -C) \in ( -|\beta |,1]$.
\end{enumerate}
    $\bullet$ If Condition 1 holds, $e^{i\lambda_1 },\  e^{i\lambda_2}\in\sigma_p(U), where $ 
     \begin{align*}
     e^{i\lambda_1 } =\frac{e^{i\Delta } -i|\beta |e^{iC}}{\left| e^{i\Delta } -i|\beta |e^{iC}\right| },\qquad e^{i\lambda_2 }=-e^{i\lambda_1 }.
     \end{align*}
    $\bullet$ If Condition 2 holds, $e^{i\lambda_3}$, $e^{i\lambda_4} \in \sigma_p(U)$, where
     
     \begin{align*}
     e^{i\lambda_3 }=\frac{e^{i\Delta } +i|\beta |e^{iC}}{\left| e^{i\Delta } +i|\beta |e^{iC}\right| } ,\qquad e^{i\lambda_4 }=-e^{i\lambda_3 }.
     \end{align*} In particular, when both of Condition 1 and Condtion 2 hold, $U$ has 4 eigenvalues written above.
    \end{theorem}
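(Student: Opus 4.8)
The plan is to apply Theorem~\ref{THEO_1} and read off its two requirements for the present coin data: the necessary condition $|\cos(\lambda-\Delta)|>|\alpha|$ and the vanishing $\det D(\lambda)=0$. Since $|\beta_p|=|\beta_m|=|\beta|$ (with $|\beta|>0$, as $C$ needs $\arg\beta_p,\arg\beta_m$), the relation $|\alpha_j|^2+|\beta_j|^2=1$ forces $|\alpha_p|=|\alpha_m|=:|\alpha|$, so by (\ref{A02_a}) the quantity $\gamma:=\sqrt{\cos^2(\lambda-\Delta)-|\alpha|^2}=\sqrt{|\beta|^2-\sin^2(\lambda-\Delta)}$ is common to both phases and is real and positive exactly on the admissible set $|\sin(\lambda-\Delta)|<|\beta|$. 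First I would use $\beta_o=0$ together with (\ref{A04_a}) to set $|v_{o,+}\rangle={}^{T}[1\ 0]$, $|v_{o,-}\rangle={}^{T}[0\ 1]$ and $\zeta_{o,\pm}=e^{\pm i(\lambda-\Delta_o)}/\alpha_o$. Feeding the eigenvectors (\ref{A03_a}) of $T_p,T_m$ into the definition of $D(\lambda)$, the four inner products collapse to $\langle v_{j,s_j}^{\perp},v_{o,+}\rangle=i\sin(\lambda-\Delta)\mp\gamma$ and $\langle v_{j,s_j}^{\perp},v_{o,-}\rangle=-\beta_j$ for $j\in\{p,m\}$, the sign of $\gamma$ being fixed by $s_j$.

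The key structural observation is that, with $\Delta_p=\Delta_m=\Delta$, the label $s_p$ is governed by the sign of $-\cos(\lambda-\Delta)$ and $s_m$ by the sign of $+\cos(\lambda-\Delta)$, so the two are always opposite and the analysis splits into the two regimes $\cos(\lambda-\Delta)>0$ and $\cos(\lambda-\Delta)<0$. In either regime I would expand $\det D(\lambda)=0$, write $\beta_p=|\beta|e^{i\arg\beta_p}$ and $\beta_m=|\beta|e^{i\arg\beta_m}$, and absorb the phases into $C=\Delta_o+(\arg\beta_p-\arg\beta_m)/2$. Setting $\psi=\lambda-C$ and introducing the angle $\eta$ by $\gamma+i\sin(\lambda-\Delta)=|\beta|e^{i\eta}$ (legitimate since $\gamma^2+\sin^2(\lambda-\Delta)=|\beta|^2$), the determinant condition reduces to \[e^{2i(\psi+\eta)}=-1\quad(\cos(\lambda-\Delta)>0),\qquad e^{2i(\psi-\eta)}=-1\quad(\cos(\lambda-\Delta)<0).\] Taking real and imaginary parts, both collapse to the single real equation $\sin(\lambda-\Delta)=\epsilon|\beta|\cos(\lambda-C)$ with $\epsilon=\pm1$, subject to the positivity constraint $\gamma=\epsilon|\beta|\sin(\lambda-C)>0$ in the first regime and $\epsilon|\beta|\sin(\lambda-C)<0$ in the second.

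The equation $\sin(\lambda-\Delta)=\epsilon|\beta|\cos(\lambda-C)$ is linear in $\cos\lambda,\sin\lambda$, so it is solved at once by $e^{i\lambda}\propto\pm(e^{i\Delta}+\epsilon i|\beta|e^{iC})$. The branch $\epsilon=-1$ yields, after normalization, the pair $e^{i\lambda_1}$ and $e^{i\lambda_2}=-e^{i\lambda_1}$, and $\epsilon=+1$ yields $e^{i\lambda_3}$ and $e^{i\lambda_4}=-e^{i\lambda_3}$, exactly the stated formulas. It then remains to decide admissibility. Computing $\cos(\lambda_i-\Delta)$, $\sin(\lambda_i-\Delta)$ and $\sin(\lambda_i-C)$ directly from these formulas, I would check that $|\cos(\lambda_i-\Delta)|>|\alpha|$ is equivalent to $(|\beta|-\sin(\Delta-C))^2>0$ for the first pair and to $(|\beta|+\sin(\Delta-C))^2>0$ for the second, while the positivity constraint on $\gamma$ selects $\lambda_1$ in the regime $\cos(\lambda-\Delta)>0$ and $\lambda_2=\lambda_1+\pi$ in $\cos(\lambda-\Delta)<0$ (and likewise $\lambda_3,\lambda_4$), sharpening these to $\sin(\Delta-C)<|\beta|$ (Condition~1) and $\sin(\Delta-C)>-|\beta|$ (Condition~2). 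Since these two half-lines cover all of $[-1,1]$, at least one condition always holds, which gives $\sigma_p(U)\neq\emptyset$; the multiplicity-one claim is inherited verbatim from Theorem~\ref{THEO_1}.

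I expect the main obstacle to be sign bookkeeping rather than any single hard estimate: one must keep consistent the interplay between the regime $\mathrm{sign}\,\cos(\lambda-\Delta)$, the labels $s_p,s_m$, the branch $\epsilon$, and the positivity of $\gamma$, and in particular verify that the two regimes do not double-count solutions but instead supply the companion eigenvalues $\lambda$ and $\lambda+\pi$ within each $\epsilon$-pair. The degenerate boundary $\cos(\lambda-\Delta)=0$, where $\gamma$ fails to be real and the spectral decomposition of $T_p,T_m$ breaks down, is excluded from the start by the necessary condition (equivalently by Lemma~\ref{LEM_A}).
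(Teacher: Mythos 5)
Your proposal is correct and follows exactly the route the paper intends for Section~3 (the paper omits the detailed computation but states that all five models are handled by applying Theorem~\ref{THEO_1}): with $\beta_o=0$ the matrix $D(\lambda)$ becomes explicit, the opposite labels $s_p,s_m$ split the analysis by the sign of $\cos(\lambda-\Delta)$, and the determinant condition reduces to $e^{2i(\psi\pm\eta)}=-1$, whose solutions with the constraint $\gamma>0$ give precisely $e^{i\lambda_1},\dots,e^{i\lambda_4}$ under Conditions 1 and 2. I verified the sign bookkeeping (in particular that both $\sigma=\pm1$ branches survive under the same inequality $\epsilon\sin(\Delta-C)>-|\beta|$, yielding the pairs $\lambda,\lambda+\pi$, and that $[-1,|\beta|)\cup(-|\beta|,1]=[-1,1]$ gives $\sigma_{\rm p}(U)\neq\emptyset$ unconditionally), and it all checks out.
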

    \begin{figure}[H]
	\centering
\includegraphics[width=9cm, height=7cm]{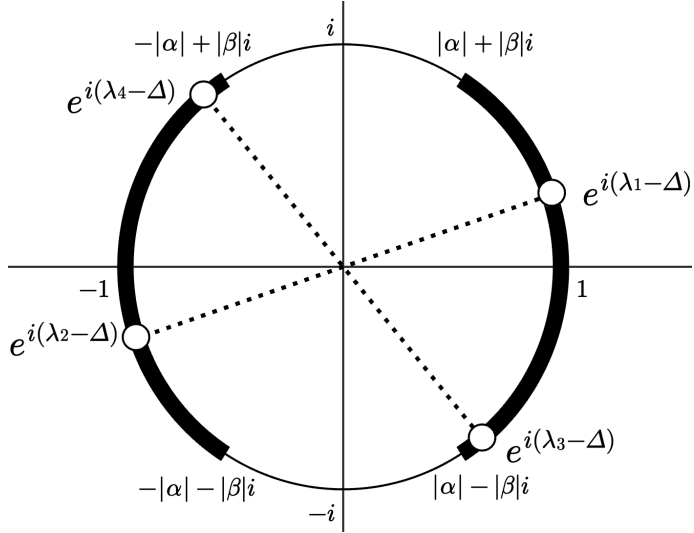}
\caption{The illustration of eigenvalues rotated through $-\Delta$ for the two-phase QWs with one defect in Theorem \ref{TWO_PHASE_ONE_DEFECT_QW2}. The bold lines indicate the range of the existence of eigenvalues rotated through $-\Delta$.  }
\end{figure}

\begin{remark}
The model in Theorem \ref{TWO_PHASE_ONE_DEFECT_QW2} is an extension of the model in following previous research:
	\begin{enumerate}
	\item[] $\bullet$ Endo, Konno, Segawa, Takei (2015) \cite{EKST2} : 
	\begin{align*}
	&(\alpha_p,\beta_p,\Delta_p)=(\frac{i}{\sqrt{2}}, \frac{ie^{i\sigma_+}}{\sqrt{2}}, \frac{3\pi}{2}),
    \\
    &(\alpha_m,\beta_m,\Delta_m)=(\frac{i}{\sqrt{2}}, \frac{ie^{i\sigma_-}}{\sqrt{2}}, \frac{3\pi}{2}),
    \\
	&(\alpha_o,\beta_o,\Delta_o)=(i, 0, \frac{3\pi}{2}),
	\qquad \sigma_{\pm}\in[0,2\pi).
	\end{align*}
	\end{enumerate}
\end{remark}
\section{Summary}
	In this paper, we analysed eigenvalues of two-phase quantum walks with one defect on the integer lattice $\mathbb{Z}$ via the method of transfer matrices.
In Theorem \ref{THEO_1}, we acquired the necessary and sufficient condition for the existence of eigenvalues. By using this condition, we succeeded in clarifying concrete eigenvalues for five models including  previous studies. In Theorem \ref{THEO_QW1} and \ref{THEO_QW2}, we treated two cases of one-defect QWs.
For the model in Theorem \ref{THEO_QW1}, the four eigenvalues exist only if $|\beta|^2>\Re (\beta \overline{\beta_0})$ is satisfied, otherwise they do not.
Moreover, the model in Theorem \ref{THEO_QW2} has two conditions, $|\beta|\cos(\Delta_o-\Delta)\pm|\alpha|\sin(\Delta_o-\Delta)<|\beta|$. Four eigenvalues exist if both conditions are satisfied, and two eigenvalues exist if only one of them is satisfied.
Theorem \ref{TWO_PHASE_QW1} and Theorem \ref{TWO_PHASE_QW2} are results for two-phase QWs, and there are two eigenvalues if and only if $\cos(\Delta_m -\Delta_p)<|\beta_p||\beta_m|-|\alpha_p||\alpha_m|$ and $\left(\mathfrak{R}\left(\beta_{m} \overline{\beta_{p}}\right)-\left|\beta_{p}\right|^{2}\right)\left(\mathfrak{R}\left(\beta_{m} \overline{\beta_{p}}\right)-\left|\beta_{m}\right|^{2}\right)>0$ are satisfied, respectively.
Finally, Theorem \ref{TWO_PHASE_ONE_DEFECT_QW2} is a result for two-phase QWs with one defect.
As in the case of Theorem \ref{THEO_QW2}, there are two conditions $\sin(\Delta -C)\in[-1,\, |\beta|)$ and $\sin(\Delta -C)\in(|\beta|,\, 1]$. Four eigenvalues exist if both conditions are satisfied, and two eigenvalues exist if only one of them is satisfied.

%
%



\end{document}